\documentclass[a4paper,american]{scrartcl}

\usepackage{tikz-cd}
\usepackage{verbatim}
\usepackage{enumerate}
\usepackage{lmodern}
\usepackage[utf8]{inputenc}
\usepackage[T1]{fontenc}
\usepackage{color}
\usepackage{todonotes}
\usepackage{babel}
\usepackage{amsfonts, amsmath, amsthm, amssymb}
\usepackage{setspace}
\usepackage[font=small]{quoting}
\usepackage{latexsym}
\setstretch{1.2}
\usepackage[bookmarks=false,pdfborder={0 0 0},colorlinks=false]{hyperref}
\usepackage{cite}
\usepackage{dsfont}
\usepackage{xcolor}
\makeatletter
\renewcommand\@biblabel[1]{}
\makeatother

%%redefinitions of commands

\renewcommand{\vec}{\boldsymbol}

\usepackage[arrow, matrix, curve]{xy}

\newcommand{\IR}{\mathbb R}

\newcommand{\IP}{\mathbb P}

\newcommand{\dd}{\mathrm{d}}

\theoremstyle{plain}
\newtheorem{Theorem}{Theorem}[section]

\newtheorem{Corollary}[Theorem]{Corollary}
\newtheorem{Proposition}[Theorem]{Proposition}
\newtheorem*{Proposition*}{Proposition}
\theoremstyle{definition}

\title{\LARGE
Arrow(s) of Time without a Past Hypothesis
}
  \author{Dustin Lazarovici\thanks{Universit\'e de Lausanne, Section de Philosophie. Email: dustin.lazarovici@live.com}\;  and Paula Reichert\thanks{LMU München, Mathematisches Institut. Email: reichert@math.lmu.de}}
  %Universit\'e de Lausanne, Facult\'e des Lettres, Section de Philosophie,}\\ \textit{CH-1015 Lausanne, Switzerland, Paula Reichert, Mathematisches Institut,
          %Ludwig-}\\ \textit{Maximilians-Universit\"at M\"unchen, Theresienstra\ss e 39, D-80333 M\"unchen, Germany.}\vspace{0.8cm}}
%\date{}

\begin{document}
	\maketitle
		\begin{abstract}\noindent The paper discusses recent proposals by Carroll and Chen, as well as Barbour, Koslowski, and Mercati to explain the (thermodynamic) arrow of time without a Past Hypothesis, i.e., the assumption of a special (low-entropy) initial state of the universe. After discussing the role of the Past Hypothesis and the controversy about its status, we explain why Carroll's model -- which establishes an arrow of time as typical -- can ground sensible predictions and retrodictions without assuming something akin to a Past Hypothesis. We then propose a definition of a Boltzmann entropy for a classical $N$-particle system with gravity, suggesting that a Newtonian gravitating universe might provide a relevant example of Carroll's entropy model. This invites comparison with the work of Barbour, Koslowski, and Mercati that identifies typical arrows of time in a relational formulation of classical gravity on shape space. We clarify the difference between this gravitational arrow in terms of shape complexity and the entropic arrow in absolute spacetime, and work out the key advantages of the relationalist theory. We end by pointing out why the entropy concept relies on absolute scales and is thus not relational. 
			
			 \end{abstract}
	\section{The easy and the hard problem of irreversibility}

What is the difference between past and future? Why do so many physical processes occur in only one time direction, despite the fact that they are governed or described, on the fundamental level, by time-symmetric microscopic laws? These questions are intimately linked to the notion of \emph{entropy} and the second law of thermodynamics. From the point of view of fundamental physics, it is the second law of thermodynamics that accounts for such phenomena as that gases expand rather than contract, that glasses break but don't spontaneously reassemble, that heat flows from hotter to colder bodies, that a car slows down and doesn't accelerate once you stop hitting the gas. All these are examples of irreversible processes, associated with an increase of entropy in the relevant physical systems. 
	
Goldstein (2001) -- possibly inspired by Chalmers' discussion of the mind-body problem (Chalmers, 1995) -- distinguishes between the \emph{easy part} and the \emph{hard part} of the problem of irreversibility. The easy part of the problem is: \emph{Why do isolated systems in a state of low entropy typically evolve into states of higher entropy (but not the other way round)?} The answer to this question was provided by Ludwig Boltzmann, who reduced the second law of thermodynamics to the statistical mechanics of point particles. He thereby developed insights and concepts whose relevance goes beyond the confines of any particular microscopic theory. 

The first crucial concept of Boltzmann's statistical mechanics is the distinction between micro- and macrostates. Whereas the microstate $X(t)$ of a system is given by the complete specification of its microscopic degrees of freedom, its macrostate $M(t)$ is specified by (approximate) values of ``observables'' that characterize the system on macroscopic scales (typical examples are volume, pressure, temperature, magnetization, and so on). The macroscopic state of a system is completely determined by its microscopic configuration, that is $M(t) = M(X(t))$, but one and the same macrostate can be realized by a large (in general infinite) number of different microstates, all of which ``look macroscopically the same''. Partitioning the microscopic state space into sets corresponding to macroscopically distinct configurations is therefore called \textit{coarse-graining}. Turning to the phase space picture of Hamiltonian mechanics for an $N$-particle system, a microstate corresponds to one point $X=(q,p)$ in phase space $\Omega \cong \IR^{3N}\times\IR^{3N}$, $q=(\mathbf{q}_1, \ldots, \mathbf{q}_N)$ being the position- and $p=(\mathbf{p}_1, \ldots, \mathbf{p}_N)$ the momentum-coordinates of the particles, whereas a macrostate $M$ corresponds to an entire region $\Gamma_M \subseteq \Omega$ of phase space (``macroregion''), namely the set of all microstates coarse-graining to $M$. Boltzmann then realized that for macroscopic systems -- that is, systems with a very large numbers of microscopic degrees of freedom -- different macrostates will in general correspond to macroregions of vastly different size, as measured by the pertinent stationary phase space measure (the Lebesgue- or Liouville-measure in case of a classical Hamiltonian system), with the equilibrium state corresponding to the macroregion of by far largest measure, exhausting almost the entire phase space volume. The Boltzmann entropy of a system is now defined as the logarithm of the phase space volume covered by its current macroregion (times a dimensional constant called Boltzmann constant): 
\begin{equation}
S(X(t)) = k_B\, \log |\Gamma_{M(X(t))} |.
\end{equation}
 Since the entropy is an extensive macrovariable (proportional to $N$, the number of microscopic constituents), we see that the ratio of phase space volume corresponding to macroregions of significantly different entropy values is of order $\exp(N)$, where $N\sim 10^{24}$ even for ``small'' macroscopic systems (the relevant order of magnitude is given by Avogadro's constant).  
 We thus understand why, under the chaotic microdynamics of a many-particle system, a microstate starting in a small (low-entropy) region of phase space will typically evolve into larger and larger macroregions, corresponding to higher and higher entropy, until it reaches the equilibrium state where it will spend almost the entire remainder of its history (that is, apart from rare fluctuations into lower-entropy states). Taken with enough grains of salt, we can summarize this in the slogan that an irreversible (i.e., entropy increasing) process corresponds to an evolution from less likely into more likely macrostates. 
  
  Notably, the time-reversal invariance of the microscopic laws implies that it cannot be true that \emph{all} microstates in a low-entropy macroregion $\Gamma_{M_1}$ evolve into states of higher entropy. But microconditions leading to an entropy-decreasing evolution are \emph{atypical} -- they form a subset of extremely small measure -- while nearly all microstates in $\Gamma_{M_1}$ evolve into states of higher entropy, i.e., entropy increase is \emph{typical}.

The easy problem of irreversibility can be arbitrarily hard from a technical point of view if one seeks to obtain rigorous mathematical results about the convergence to equilibrium in realistic physical models. It is easy in the sense that, conceptually, Boltzmann's account is well understood and successfully applied in physics and mathematics -- despite ongoing (but largely unnecessary) controversies and misconceptions in the philosophical literature (see Bricmont (1999) and Lazarovici and Reichert (2015) for a more detailed discussion and responses to common objections).

The hard problem begins with the question: \emph{Why do we find systems in low-entropy states to begin with if these states are so unlikely?} Often the answer is that \emph{we} prepared them, creating low-entropy subsystems for the price of increasing the entropy in their environment. But why then is the entropy of this environment so low -- most strikingly in the sense that it allows \emph{us} to exist? If one follows this rationale to the end, one comes to the conclusion that the universe as a whole is in a state of low entropy (that is, globally, in a spatial sense; we don't just find ourselves in a low-entropy pocket in an otherwise chaotic universe) and that this state must have evolved from a state of even lower entropy in the distant past. The latter assumption is necessary to avoid the absurd conclusion that our present macrostate -- which includes all our memories and records of the past -- is much more likely the product of a fluctuation out of equilibrium than of the low-entropy past that our memories and records actually record. In other words: only with this assumption does Boltzmann's account ``make it plausible not only that the paper will be yellower and ice cubes more melted and people more aged and smoke more dispersed in the future, but that they were less so (just as our experience tells us) in the past.'' (Albert (2015, p. 5); for a good discussion of this issue see also Feynman (1967, Ch. 5), and Carroll (2010).)

In sum, the hard part of the problem of irreversibility is to explain the existence of a \emph{thermodynamic arrow of time in our universe}, given that the universe is governed, on the fundamental level, by reversible microscopic laws. And the standard account today involves the postulate of a very special (since very low-entropy) initial macrostate of the universe. Albert (2001) coined for this postulate the now famous term \emph{Past Hypothesis} (PH). But the status of the Past Hypothesis is highly controversial. Isn't the very low-entropy beginning of the universe itself a mystery in need of scientific explanation?

In the next section, we will briefly recall this controversy and the various attitudes taken towards the status of the PH. Section \ref{sec:SLwoPH} then introduces recent ideas due to Sean Carroll and Julian Barbour to explain the arrow of time \emph{without} a Past Hypothesis, that is, as a feature of \emph{typical} universes. In Section \ref{sec:rolePH}, we discuss if the Carroll model can indeed ground sensible inferences about the past and future of our universe without assuming something akin to the PH. Section \ref{sec:graventropy} will propose a Boltzmann entropy for Newtonian gravity, suggesting that a universe of $N$ gravitating point particles provides a relevant realization of Carroll's entropy model. Section \ref{sec:barbour}, finally, introduces the results of Barbour, Koslowski, and Mercati that establishes an arrow of time in a relational formulation of Newtonian gravity on ``shape space''. We will work out potential advantages of the relational theory and clarify the differences between the ``gravitional arrow'' of Barbour et al. and the entropic arrow identified in the preceding section.

\section{The controversy over the Past Hypothesis}
So, the standard response to the hard part of the problem of irreversibility involves the postulate of a very low-entropy beginning of our universe. But what is the status of this Past Hypothesis? In the literature, by and large three different views have been taken towards this issue.
\begin{enumerate}
	\item The low-entropy beginning of the universe requires an explanation. 
	\item The low-entropy beginning of the universe does not require, or allow, any further explanation. 
	\item The Past Hypothesis is a law of nature (and therefore does not require or allow any further explanation). 
\end{enumerate}

	The first point of view is largely motivated by the fact that our explanation of the thermodynamic arrow is based on a \emph{typicality reasoning} (e.g. Lebowitz (1993), Goldstein (2001), Lazarovici and Reichert (2015)). Assuming a low-entropy initial macrostate of the universe, Boltzmann's analysis allows us to conclude that \emph{typical} microstates relative to this macrostate will lead to a thermodynamic evolution of increasing entropy. It is then not a good question to ask why the actual initial conditions of the universe were among the typical ones. Once a fact about our universe -- such as the existence of a thermodynamic arrow -- turns out to be typical, given the fundamental laws and the relevant boundary conditions, there is nothing left to explain (except, possibly, for the boundary conditions). On the flipside, atypical facts are usually the kind of facts that cry out for further explanation (cf. the contribution of Maudlin to this volume). And to accept the PH is precisely to assume that the initial state of our universe was atypical, relative to all possible microstates, in that it belonged to an extremely small (i.e., very low-entropy) macroregion. Penrose (1989) estimates the measure of this macroregion relative to the available phase space volume to be at most $1:10^{10^{123}}$ -- a mind-bogglingly small number. Notably, the explanatory pressure is mitigated by the fact that the PH entails only a special initial macrostate rather than a microscopic fine-tuning. In the case of a gas in a box, this would be the difference between atypical conditions that one can create with a piston and atypical conditions that one could create only by controlling the exact position and momentum of every single particle in the system. The point here is not that this makes it easier for a hypothetical creator of the universe, but that only the latter (microscopic) kind of fine-tuning gives rise to the worry that -- given the huge number of microscopic degrees of freedom and the sensitivity of the evolution to variations of the initial data -- atypical initial conditions could explain \emph{anything} (and thus explain nothing; cf. Lazarovici and Reichert (2015)). Nonetheless, the necessity of a PH implies that our universe looks very different from a typical model of the fundamental laws of nature -- and this is a fact that one can be legitimately worried about. 
\begin{figure}[ht]
	\begin{center}
	\includegraphics[width=0.6\textwidth]{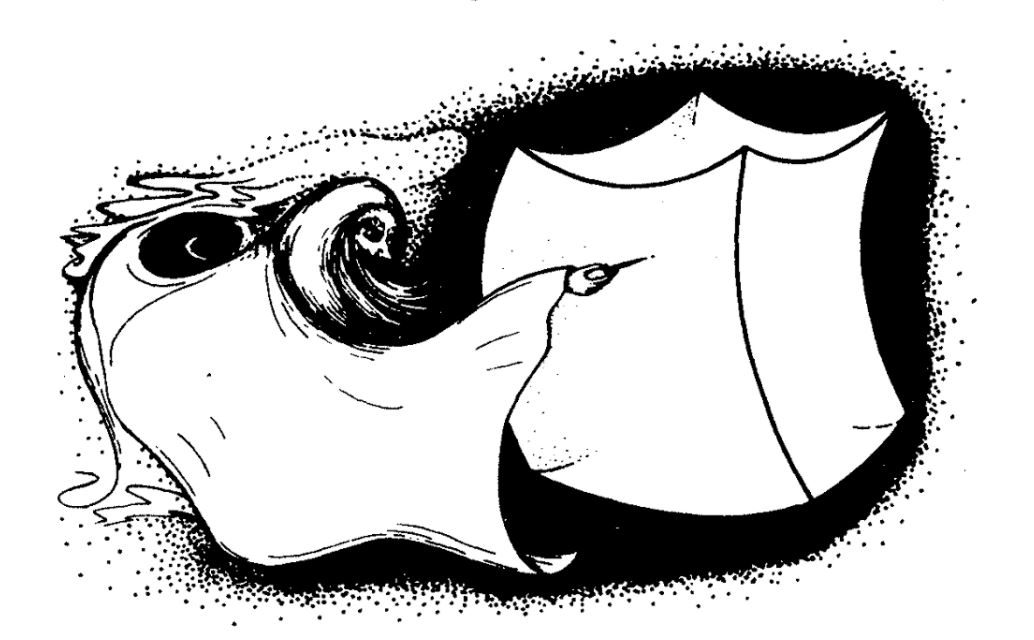}
	\end{center}
\caption{\footnotesize God picking out the special (low-entropy) initial conditions of our universe. Penrose (1989).}\label{fig:godinitialconditions}
\end{figure}

The second point of view was in particular defended by Callender (2004). While Callender is also sympathetic to the third option (regarding the PH as a law), he makes the broader case that a) there is no single feature of facts -- such as being atypical -- that makes them require explanation, and b) the conceivable explanations of the Past Hypothesis aren't much more satisfying than accepting it as a brute and basic fact. Notably, Ludwig Boltzmann himself eventually arrived at a similar conclusion: 
\begin{quote} The second law of thermodynamics can be proved from the mechanical theory if one assumes that the present state of the universe, or at least that part which surrounds us, started to evolve from an improbable state and is still in a relatively improbable state. This is a reasonable assumption to make, since it enables us to explain the facts of experience, and one should not expect to be able to deduce it from anything more fundamental. (Boltzmann, 1897)\end{quote}

The third option, finally, is most prominently advocated by Albert (2001) and Loewer (2007) in the context of the Humean \emph{Best System Account} of laws. Upon their view, the laws of nature consist in a) the microscopic dynamical laws b) the PH and c) a probability (or typicality) measure on the initial macroregion. This package has been dubbed the ``mentaculus'' (Loewer, 2012). It is supposed to correspond to the best-system-laws because it strikes the optimal balance between being simple and being informative about the history of our universe (the ``Humean mosaic''). In particular, adding b) and c) to the microscopic laws comes at relatively little cost in terms of simplicity but makes the system much more informative, precisely because it accounts for the thermodynamic arrow of time and allows for probabilistic inferences. In addition, Albert (2001) and Loewer (2007) employ the mentaculus in a sophisticated analysis of records,  counterfactuals, and more, the discussion of which goes beyond the scope of this paper. Instead, it is important to note that the proposition which Albert wants to grant the status of a law is not that the universe started in \emph{any} low-entropy state. The PH, in it's current form, is rather a placeholder for ``the macrocondition ... that the normal inferential procedures of cosmology will eventually present to us'' (Albert, 2001, p. 96). Ideally (we suppose), physics will one day provide us with a nice, and simple, and informative characterization of the initial boundary conditions of the universe -- maybe something along the lines of Penrose's Weyl Curvature Conjecture (Penrose, 1989) -- that would strike us as ``law-like''. But this is also what many advocates of option 1 seem to hope for as an ``explanation'' of the PH. So while option 3 sounds like the most clear-cut conclusion about the status of the PH, it is debatable to what extent it settles the issue. The more we have to rely on future physics to fill in the details, the less is already accomplished by calling the Past Hypothesis a law of nature.

\section{Thermodynamic arrow without a Past Hypothesis}\label{sec:SLwoPH}
In recent years, Sean Carroll together with Jennifer Chen (2004; see also Carroll (2010)), and Julian Barbour together with Tim Koslowski and Flavio Mercati (2013, 2014, 2015), independently put forward audacious proposals to explain the arrow of time \emph{without} a Past Hypothesis. While Barbour's arrow of time is not, strictly speaking, an \emph{entropic} arrow (but rather connected to a certain notion of complexity), Carroll's account is largely based on the Boltzmannian framework, although with a crucial twist. For this reason, we shall focus on the Carroll account first, before comparing it to the theory of Barbour et al. in Section \ref{sec:barbour}.

The crucial assumption of Carroll and Chen is that the relevant stationary measure on the phase space of the universe is unbounded, allowing for macrostates of \emph{arbitrarily high entropy} (while it's assumed that none has \emph{infinite} entropy). Hence, \emph{every} macrostate is a non-equilibrium state from which the entropy can typically increase in both time directions, defining a thermodynamic arrow -- or rather two opposite ones -- on either side of the entropy minimum. A typical entropy curve (one hopes) would thus be roughly parabolic or ``$\mathsf{U}$-shaped'', attaining its global minimum at some moment in time and growing monotonically (modulo small fluctuations) in both directions from this vertex (Fig. \ref{fig:carrollentropy}). Barbour et al. (2015) describe such a profile as ``one-past-two-futures'', the idea being that observers on each branch of the curve would identify the direction of the entropy minimum -- which the authors name the \emph{Janus point} -- as their past. In other words, we would have two future-eternal epochs making up the total history of the universe, with the respective arrows of time pointing in opposite directions. 
 
 \begin{figure}[ht]
 	\includegraphics[width=\textwidth]{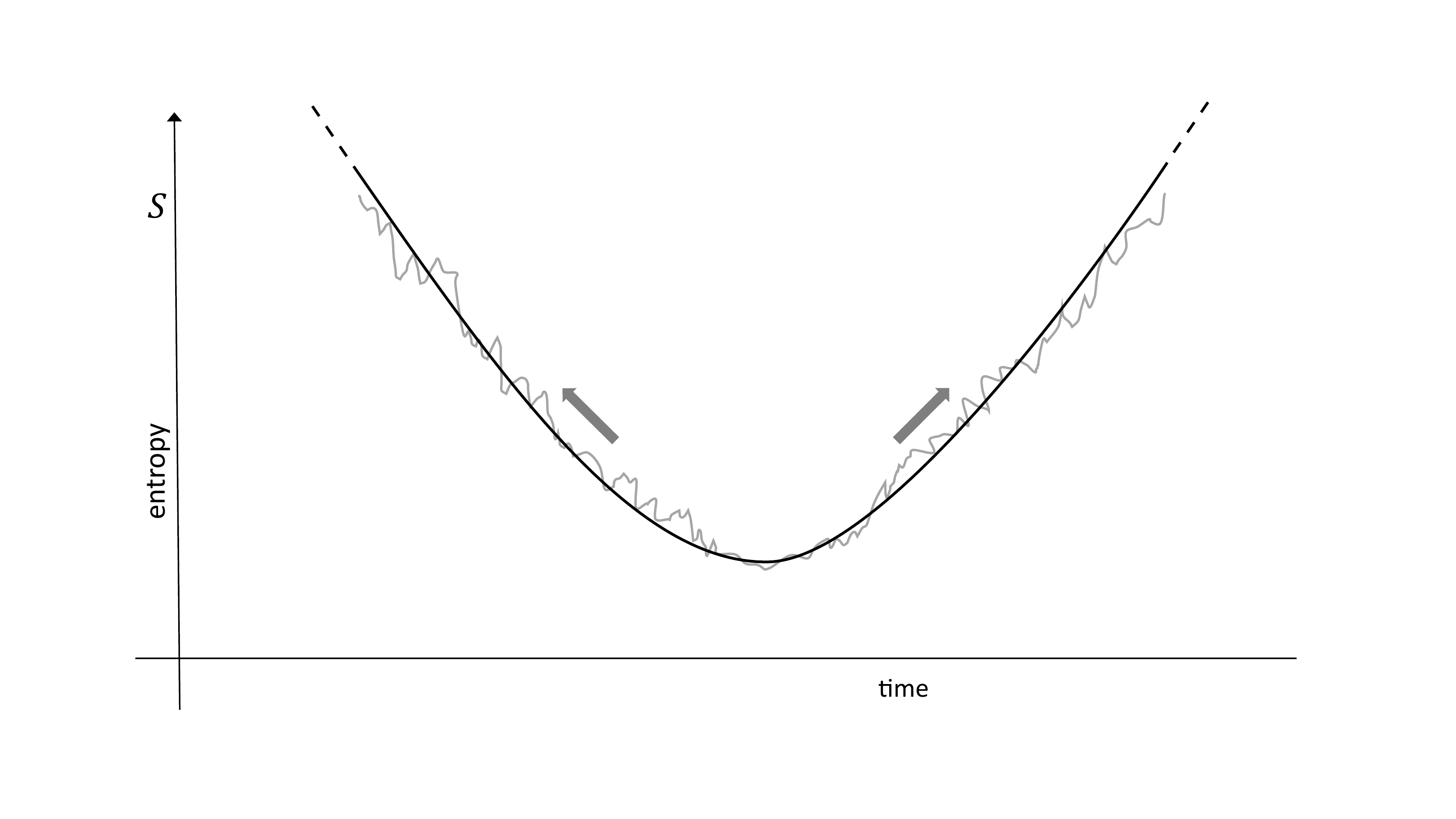}
 	\caption{\footnotesize Typical entropy curve (with fluctuations and interpolated) for a Carroll universe. The arrows indicate the arrow(s) of time on both sides of the ``Janus point'' (entropy minimum).} \label{fig:carrollentropy}
 \end{figure}

The Carroll model is intriguing because it is based on the bold, yet plausible assumption that the universe has no equilibrium state -- a crucial departure from the ``gas in the box'' paradigm that is still guiding most discussions about the thermodynamic history of the universe. And it is particularly intriguing for anybody worried about the status of the Past Hypothesis, because it seeks to establish the existence of a thermodynamic arrow in the universe as \emph{typical}. This is in notable contrast to the standard account, in which we saw that an entropy gradient is typical only under the assumption of atypical -- and time-asymmetric -- boundary conditions. 

Prima facie, it seems plausible that an eternal universe with unbounded entropy would exhibit the $\mathsf{U}$-shaped entropy profile shown in Fig. \ref{fig:carrollentropy}. For if we start in \emph{any} macrostate, the usual Boltzmannian arguments seem to suggest that typical microstates in the corresponding macroregion lead to a continuous increase of entropy in both time directions (since there are always vastly larger and larger macroregions, corresponding to higher and higher entropy values, that the microstate can evolve into). And then, any sensible regularization of the phase space measure would allow us to conclude that a $\mathsf{U}$-shaped entropy profile is typical \emph{tout court}, that is, with respect to all possible micro-histories. 

However, if we assume, with Carroll, a non-normalizable measure -- that assigns an infinite volume to the total phase space and thus allows for an unbounded entropy -- the details of the dynamics and the phase space partition must play a greater role than usual in the Boltzmannian account. For instance, the measure of low-entropy macroregions could sum up to arbitrarily (even infinitely) large values, exceeding those of the high-entropy regions. Or the high-entropy macroregions could be arbitrarily far away in phase space, so that the dynamics do not carry low-entropy configurations into high-entropy regions on relevant time-scales. The first important question we should ask is therefore: 
\begin{quote}
Are there any interesting and realistic dynamics that give rise to typical macro-histories as envisioned by Carroll and Chen?
\end{quote}
The original idea of Carroll and Chen (2004) is as fascinating as it is speculative. The authors propose a model of eternal spontaneous inflation in which new baby universes (or ``pocket universes'') are repeatedly branching off existing ones. The birth of a new universe would then increase the overall entropy of the multiverse, while the baby universes themselves, growing from a very specific pre-existing state (a fluctuation of the inflaton field in a patch of empty de-Sitter space), would typically start in an inflationary state that has much lower entropy than a standard big bang universe. This means, in particular, that our observed universe can look like a low-entropy universe, with an even lower-entropy beginning, even when the state of the multiverse as a whole is arbitrarily high up the entropy curve.  The details of this proposal are beyond the scope of our paper and do not (yet) include concrete dynamics or a precise definition of the entropy. 

In more recent talks, Carroll discusses a simple toy model -- essentially an ideal gas without a box -- in which a system of $N$ free particles can expand freely in empty space. The only macrovariable considered is the moment of inertia, $I=\sum\limits_{i=1}^N \mathbf{q}^2_i$, providing a measure for the expansion of the system. It is then easy to see that $I$ will attain a minimal value at some moment in time $t_0$, from which it grows to infinity in both time directions (cf. equation \eqref{lagrange-jacobi} below). The same will hold for the associated entropy since a macroregion, corresponding to a fixed value of $I$, is just a sphere of radius $\sqrt{I}$ in the position coordinates (while all momenta are constant). The entropy curve will thus have the suggested $\mathsf{U}$-shape with vertex at $t=t_0$. A detailed discussion of this toy model can be found in Reichert (2012), as well as Goldstein et al. (2016). 

In this paper, we will not discuss these two models in any more detail. Instead, we are going to argue in Section \ref{sec:graventropy} that there exists a dynamical theory fitting Carroll's entropy model that is much less speculative than baby universes and much more interesting, physically, than a freely expanding system of point particles. This theory is \emph{Newtonian gravity}. It will also allow us to draw interesting comparisons between the ideas of Carroll and Chen and those of Barbour, Koslowski, and Mercati. 

First, however, we want to address the question, whether this entropy model would even succeed in explaining away the Past Hypothesis. Are typical macro-histories as envisioned by Carroll and sketched in Fig. \ref{fig:carrollentropy} sufficient to ground sensible inferences about our past and future? Or would we still require -- if not the PH itself, then a close variant -- an equally problematic assumption about the specialness of the observed universe?

%More precisely: 
%\begin{quote}Can the Carroll model ground sensible statistical inferences about the thermodynamic history of our universe without assuming (something akin to) a Past Hypothesis? 
%\end{quote}

\section{The (dispensible) role of the Past Hypothesis}\label{sec:rolePH}

%\subsection{Is a thermodynamic arrow typical?}
The question to be addressed in this section is thus the folllowing:
%Having provided evidence that an $N$-particle Newtonian universe with gravity may actually be an example of a Carroll universe, we shall leave the specifics of the microscopic theory aside and turn to the second question formulated in section \ref{sec:SLwoPH}:

\begin{quote} 
	Can Carroll's entropy model ground sensible statistical inferences about the thermodynamic history of our universe without assuming (something akin to) a Past Hypothesis? 
\end{quote}

\noindent To approach this issue, and clarify the role of the PH in the standard account, we have to disentangle two questions that are often confounded in such discussions:
\begin{enumerate}[i)]
	
	\item  Given the fundamental laws of nature, what do typical macro-histories of the universe look like? In particular: is the existence of a thermodynamic arrow typical?
	
	\item Given our knowledge about the present state of the universe, what can we reasonably infer about its past and future?
	
\end{enumerate}

\noindent The answer to question i) will, in general, depend on the dynamical laws as well as cosmological considerations. If we have infinite time and a finite maximal entropy, a typical macro-history will be in thermodynamic equilibrium almost all the time, but also exhibit arbitrarily deep fluctuations into low-entropy states, leading to periods with a distinct entropy gradient, i.e., a local thermodynamic arrow. This \emph{fluctuation scenario} was, in fact, one of Boltzmann's attempts to resolve to the hard problem of irreversibility (Boltzmann, 1896). 

However, to assume a fluctuation as the origin of our thermodynamic arrow is highly unsatisfying, Feynman (1967, p. 115) even calls it ``ridiculous''. The reason is that fluctuations which are just deep enough to account for our present macrostate are much more likely (and would thus occur much more frequently\footnote{e.g. in the sense $\limsup\limits_{T \to +\infty} \frac{1}{T} \bigl(\,  \# \text{fluctuations with entropy minimum} \approx S \text{ in the time-interval } [-T, T] \, \bigr)$}) than fluctuations producing an even lower-entropy past from which the current state could have evolved in accordance with the second law. We would thus have to conclude that we are currently experiencing the local entropy minimum, that our present state -- including all our records and memories -- is in fact the product of a random fluctuation rather than a lower-entropy past. Feynman makes the further case that the fluctuation scenario leads not only to absurd conclusions about the past, but to wrong ones about the present state of the universe, as it compels us to assume that our current fluctuation is not any deeper -- and hence more unlikely -- than necessary to explain the evidence we already have: If we dig in the ground and find a dinosaur bone, we should not expect to find other bones nearby. If we stumble upon a book about Napoleon, we should not expect to find other books containing the same information about a guy called Napoleon. The most extreme form of this reductio ad absurdum is the \emph{Boltzmann brain problem} (see, e.g., Carroll (2010) for a recent discussion): a fluctuation that is just deep enough to account for your empirical evidence (many people claim) would produce only your brain, floating in space, with the rest of the universe at equilibrium. You should thus conclude that this is, by far, the most likely state of the universe you currently experience. 

%\begin{figure}[ht]
%	\includegraphics[width = \textwidth]{fluctuationscenario2.pdf}
%	\caption{\footnotesize Typical entropy curve in Boltzmann's fluctuation scenario. Periods of (near) equilibrium are much longer than depicted. }\label{fig:fluctuationscenario}
%\end{figure}

The only possible escape in such a fluctuation scenario is to invoke the additional postulate -- a form of Past Hypothesis -- that the present macrostate is not the bottom of the fluctuation, but has been preceded by a sufficiently long period of entropy increase from a state of much lower entropy, still. In this context, the PH would thus serve a \emph{self-locating} function, taking the form of an indexical proposition that locates our present state on the upwards-slope of a particularly deep fluctuation (Fig. \ref{fig:selflocating}). 

\begin{figure}[ht]
	\includegraphics[width = \textwidth]{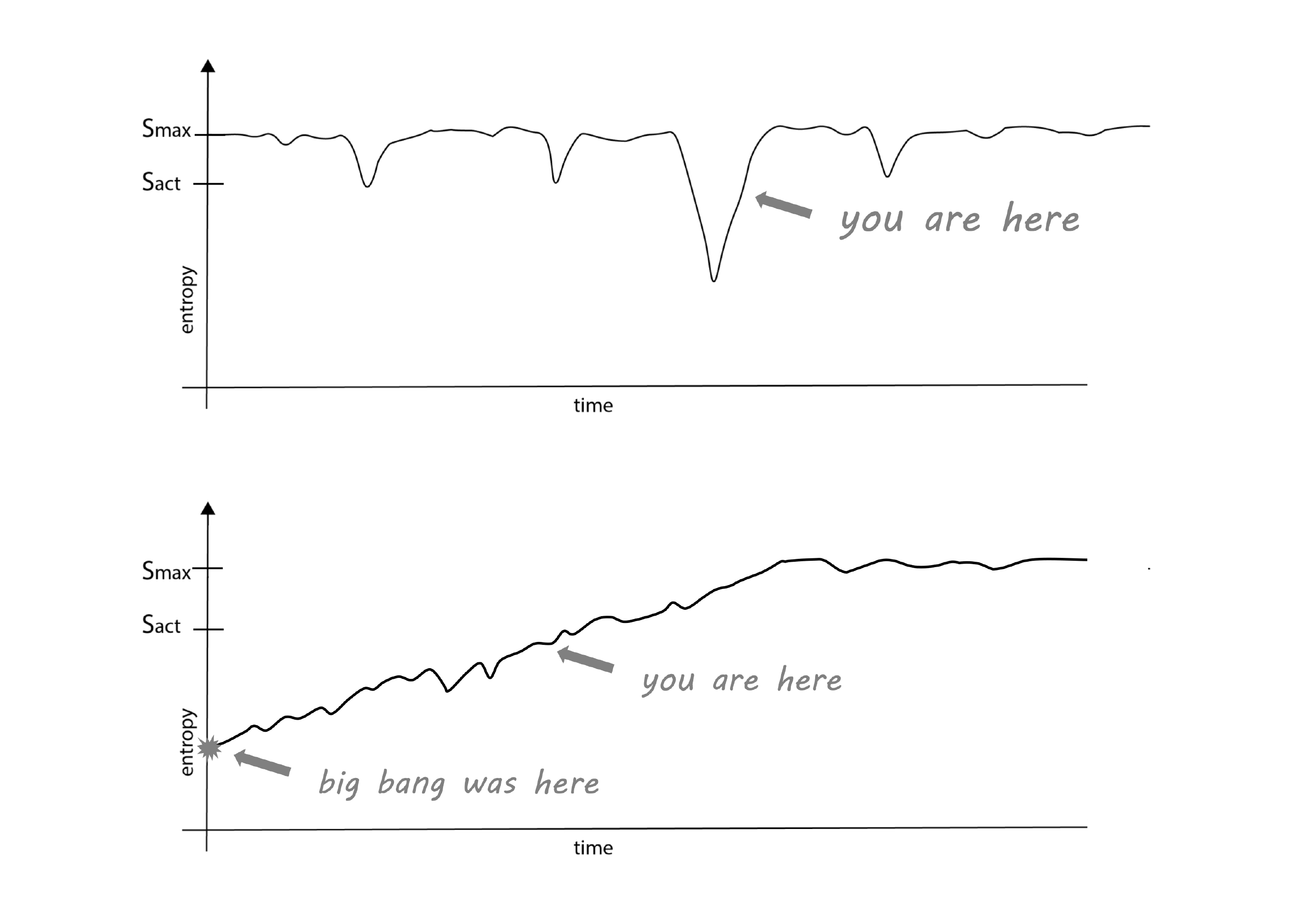}
	\caption{\footnotesize Self-location hypothesis in the fluctuation scenario (upper image) and big bang scenario (lower image) with bounded entropy. Time-scales in the upper image are much larger than below and periods of equilibrium are much longer than depicted.} \label{fig:selflocating}
\end{figure}

The now standard account assumes a bounded entropy and a relatively young universe  -- about 13.8 billion years according to estimates from modern big bang cosmology. In this setting (we interpret the big bang as the actual beginning of time), a typical history would not have any thermodynamic arrow at all (the time-scale of $\sim 10^{10}$ years is too short for significant entropy fluctuations on cosmological scales). Thus, we need the PH to account for the existence of a thermodynamic arrow in the first place by postulating a low-entropy boundary condition at the big bang. A self-locating proposition is still crucial and hidden in the assumption of a young universe. Winsberg (2012) makes it explicit in what he calls the ``Near Past Hypothesis'' (NPH), which is that our present state lies between the low-entropy beginning of the universe and the time of first relaxation into equilibrium.  Without such an assumption -- and assuming that the universe is eternal in the future time direction -- we would essentially be back in a fluctuation scenario with all its Boltzmann-brain-like absurdities. In a future-eternal universe with bounded entropy, there are still arbitrarily many entropy fluctuations that are just deep enough to account for our present evidence (but not much deeper). And we would still have to conclude that we are much more likely in one of these fluctuations than on the initial upwards slope originating in the very low-entropy big bang (cf. also Loewer, forthcoming). 

The self-locating role of the PH (which we take to include the NPH -- for what would be the point otherwise?) is thus indispensable. And it is, in fact, the indexical proposition involved, rather than the non-dynamical boundary condition, that we would be surprised to find among the fundamental laws of nature as we consider this option for the status of the Past Hypothesis.

Carroll's model, finally, postulates an eternal universe and unbounded entropy, suggesting that typical macro-histories will have the $\mathsf{U}$-shaped entropy profile depicted in Fig. \ref{fig:carrollentropy}. If this works out -- and we will argue that it does, but at least see no reason why it couldn't -- the existence of a thermodynamic arrow (respectively two opposite ones) will be \emph{typical}. (For completeness, we could also discuss the option of a temporally finite universe and unbounded entropy, but this model does not seem advantageous and goes beyond the scope of the paper.) In the upshot, the Carroll model can indeed explain the existence of a thermodynamic arrow without invoking a PH as a fundamental postulate over and above the microscopic laws and the pertinent typicality measure. 
It may still turn out that the theory requires a PH for its self-locating function, \emph{if} it would otherwise imply that our current macrostate should be the global entropy minimum, i.e., has not evolved from a lower-entropy past. The relevant version of the PH may then take the form of an indexical clause -- stating that our present state is high up the entropy curve -- or be a characterization of the entropy minimum (Janus point) of our universe. (In the first case, the PH would first and foremost locate the present moment within the history of an eternal universe, in the latter, it would first and foremost locate the actual universe within the set of possible ones.) But it is not obvious why the Carroll model would lead to the conclusion that we are currently at (or near) the entropy minimum, and the issue actually belongs to our second question -- how to make inferences about the past -- to which we shall now turn.

\subsection{Predictions and retrodictions}

The most straightforward response to question ii) -- how to make inferences about the past or future -- is the following method of statistical reasoning: Observe the current state of the universe (respectively a suitably isolated subsystem), restrict the pertinent probability (more correctly: typicality) measure to the corresponding macroregion in phase space, and use the conditional measure to make probabilistic inferences about the history of the system. We shall call this \emph{naive evidential reasoning} (reviving a terminology introduced in an unpublished 2011 draft of Goldstein et al., 2016). The negative connotation is warranted because we know that while this kind of reasoning works well for \emph{predictions} -- inferences about the future -- it leads to absurd, if not self-refuting, conclusions when applied for \emph{retrodictions} -- i.e., inferences about the past. 

The now standard move to avoid this predicament is to employ the PH to block naive evidential reasoning in the time direction of the low-entropy boundary condition. For sensible retrodictions, we learn, one must conditionalize on the low-entropy initial state in addition to the observed present state. It is rarely, if ever, noted that an appeal to a PH may be sufficient but not necessary at this point. The key is to appreciate that the second question -- how to make inferences about the past and future of the universe -- must be addressed subsequently to the first -- whether a thermodynamic arrow in the universe is typical. For if we have good reasons to believe that we live in a universe with a thermodynamic arrow of time, this fact alone is sufficient to conclude the irrationality of retrodicting by conditionalizing the phase space measure on the present macrostate. More precisely, it follows from the Boltzmannian analysis that in a system with a thermodynamic arrow, the evolution towards the future (the direction of entropy increase) looks like a \emph{typical} one relative to any intermediate macrostate, while the actual microstate is necessarily atypical with respect to its evolution towards the entropic past (see Fig. \ref{fig:pastfutureMS}). This is essentially the reversal of the familiar paradox that entropy increase in \emph{both} time directions comes out as typical relative to any non-equilibrium macrostate. 

In the upshot, the fact that naive evidential reasoning doesn't work towards the entropic past can be inferred from the existence of a thermodynamic arrow; it does not have to be inferred from the assumption of a special initial state. The explanation of the thermodynamic arrow, in turn, may or may not require a special initial state, but this was a different issue -- discussed above. 

\begin{figure}[ht]
	\centering
	\includegraphics[width = 0.9 \textwidth]{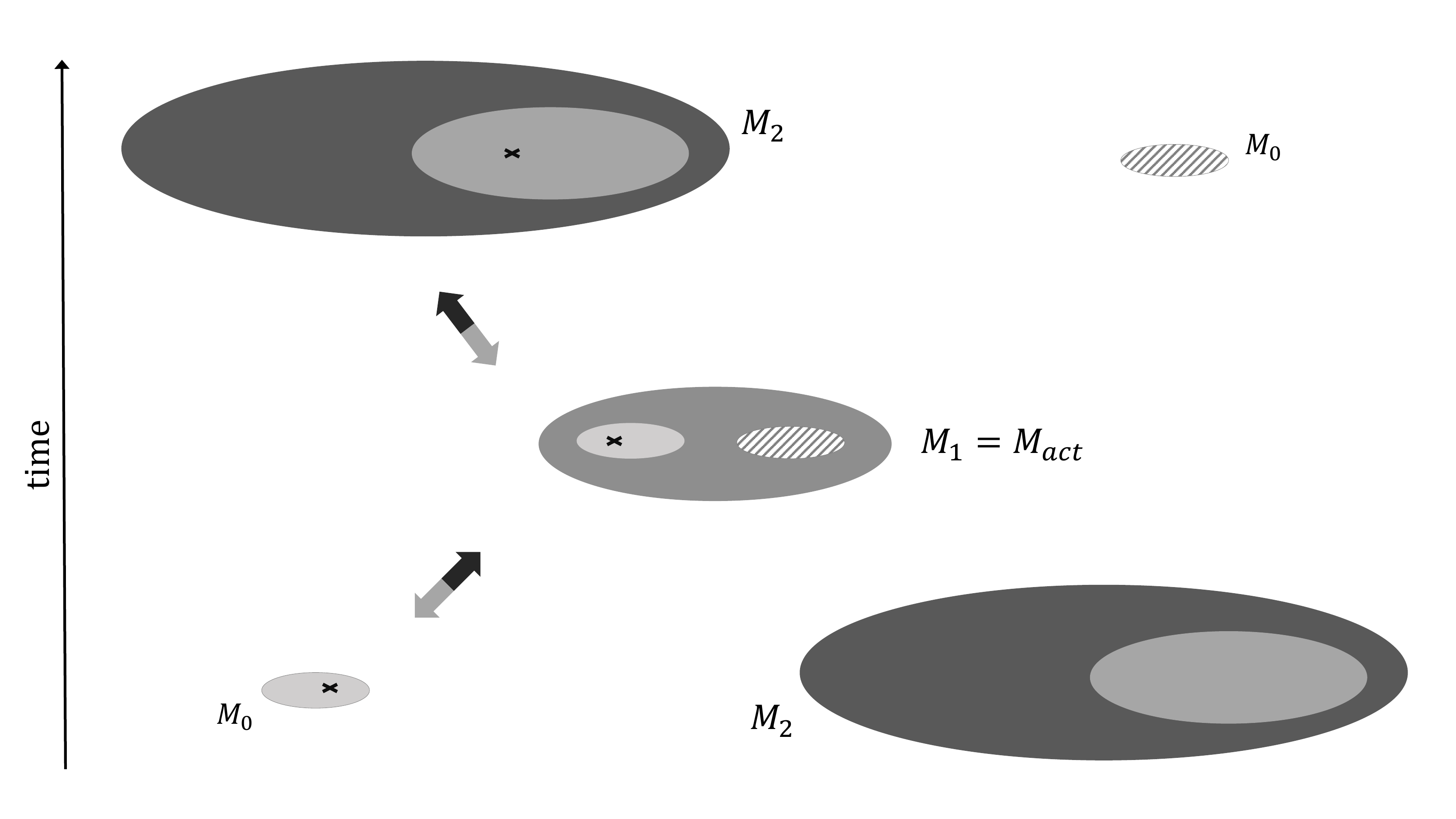}
	\caption{\footnotesize We assume, for simplicity, that all macrostates are invariant under the time-reversal transformation ($(q,p) \to (q,-p)$ in classical mechanics). Typical microstates in the intermediate macro-region $M_1=M_{act}$ evolve into a higher-entropy region $M_2$ in both time directions. Only a small subset of microstates (light grey area) have evolved from the lower-entropy state $M_0$ in the past; an equally small subset (shaded area) will evolve into $M_0$ in the future. The actual microstate (cross) has evolved from the lower-entropy state in the past; only its future time-evolution corresponds to the typical one relative to the macrostate $M_{act}$.}  \label{fig:pastfutureMS}  
	
\end{figure}

If the relevant physical theory tells us that a thermodynamic arrow is typical, i.e., exists in almost all possible universes, we have a very strong theoretical justification for believing that we actually live in a universe with a thermodynamic arrow. And if we believe that we live in a universe with a thermodynamic arrow,  a rational method for making inferences about the past is not naive evidential reasoning but \emph{inference to the best explanation}. Rather than asking ``what past state (or states) would be typical given our present macrostate?'', we should ask ``what past state (or states) would make our present macrostate typical?''. In more technical terms, rather than adjusting our credence that the past macrostate was $M_0$ according to $\IP(M_{0} \mid M_{act})$, where $M_{act}$ is our present macrostate,  we should ``bet'' on macrostates $M_0$ that maximize  $\IP(M_{act} \mid M_{0})$. If we find a dinosaur bone, we should infer a past state containing a dinosaur. If we find history books with information about Napoleon, we should infer a past state containing a French emperor by the name of Napoleon. We do not claim that this amounts to a full-fledged analysis of the epistemic asymmetry, but it is a large enough part, at least, to uphold the reliability of records and get a more fine-grained analysis going. In particular, considering the universe as a whole, the fact that it has evolved from a lower-entropy state in the past is \emph{inferred}, rather than assumed, by this kind of abductive reasoning.  %It is also worth pointing out that this way of reasoning can affect out credences not only about the past but also about the present state. That is because the best explanation of the available evidence is often indicative of further evidence (e.g. more dinosaur bones) that we haven't yet discovered.

By now it should be clear that the debate is not about whether some version of the PH is \emph{true}, but about whether it is an \emph{axiom}. And the upshot of our discussion is that if the existence of a thermodynamic arrow in the universe turns out to be typical, we can consider our knowledge of the low-entropy past to be reasonably grounded in empirical evidence and our best theory of the microscopic dynamics (as any knowledge about our place in the history of the universe arguably should).

Another way to phrase the above analysis goes as follows: Naive evidential reasoning applied to both time directions will always lead to the conclusion that the current macrostate is the (local) entropy minimum. However, if we know that we observe a universe (or any other system) with a thermodynamic arrow, we also know that this conclusion would be wrong \emph{almost all the time}. More precisely, it would be wrong unless we happened to observe \emph{a very special period} in the history of the universe in which it is close to its entropy minimum. 

Goldstein, Tumulka, and Zangh\`i (2016) provide a mathematical analysis of this issue in the context of Carroll's toy model of freely expanding particles. Their discussion shows that the two opposing ways of reasoning -- typical microstates within a given macroregion versus typical time-periods in a history characterized by a $\mathsf{U}$-shaped entropy curve -- come down to different ways of regularizing the unbounded phase space measure by choosing an appropriate cut-off. Goldstein et al. then argue against the first option, corresponding to naive evidential reasoning, by saying that certain facts about the past amount to ``pre-theoretical'' knowledge. We provided a concurrent argument based on a theoretical (Boltzmannian) analysis. Nonetheless, from a formal point of view, a certain ambiguity remains. In Section \ref{sec:barbour}, we will discuss how the relational framework of Barbour et al. is able to improve upon this situation.  

\subsection{The mystery of our low-entropy universe}

Another possible objection to the Carroll model (disregarding baby universes) goes as follows: Doesn't the fact that the entropy of the universe could be arbitrarily high make its present very low value -- and the even lower value at the Janus point -- only more mysterious? In other words: doesn't the fact that the entropy could have been arbitrarily high just increase the explanatory pressure to account for the specialness of the observed universe? While we cannot completely deny the legitimacy of this worry, our intuition is that the Carroll model precludes any \emph{a priori} expectation of what the entropy of the universe \emph{should} be. If it can be arbitrarily (but not infinitely) large, any possible value could be considered ``mysteriously low'' by skeptics. %Emulating Sidney Morgenbesser's famous response to the ontological question, our reaction to the question \emph{Why is the entropy of the universe so low?} would thus be: ``If it was any higher, you'd be still complaining!''

Sidney Morgenbesser famously responded to the ontological question \emph{Why is there something rather than nothing}: ``If there was nothing, you'd be still complaining!'' In the same spirit (though not quite as witty), our reaction to the question \emph{Why is the entropy of the universe so low?} would be: ``If it was any higher, you'd be still complaining!'' 

We concede, nonetheless, that divergent intuitions about this question are possible. In fact, the ambiguity is once again paralleled by mathematical issues arising from the non-normalizability of the phase space measure. When Penrose (1989) estimates that the entropy of the universe near the big bang could have been about $10^{123}$ times higher, the common worry is not that the actual value was so low in comparison, but that a $10^{123}$ times higher entropy would seem $10^{10^{123}}$ times \emph{as likely}. While this conclusion is questionable even in the standard Boltzmannian framework (with a finite phase space volume), the interpretation of a non-normalizable phase space measure as a \emph{probability} measure is problematic, to say the least, leading in particular to the paradox that any finite range of entropy values has probability zero. Again, we'll have to leave it at that as far as the discussion of the Carroll model is concerned, explaining instead in the last section how the shape space theory of Barbour et al. is able to resolve the issue. First however, we owe the reader some evidence that we haven't been discussing the empty set, but that Newtonian gravity might provide a relevant example of a Carroll universe.

%In other words, if we are baffled by the fact that the entropy of the universe could be about $10^{123}$ times higher than it actually is, according to Penrose's estimate, shouldn't we be ``infinitely'' more baffled if Carroll tells us that it could also be $10^{200}$ or $10^{1000}$ or $10^{{10^{1000}}}$ times higher, etc?

\section{Entropy of a classical gravitating system}\label{sec:graventropy}
There is a lot of confusion and controversy about the statistical mechanics of classical gravitating systems, despite the fact that statistical methods are commonly and successfully used in areas of astrophysics that are essentially dealing with the Newtonian $N$-body problem (see, e.g., Heggie and Hut (2003)). (An excellent paper clearing up much of the confusion is Wallace (2010); see Callender (2009) for some problematic aspects of the statistical mechanics of gravitating systems and Padmanabhan (1990) for a mathematical treatment.) Some examples of common claims are: 

\begin{enumerate}[a)]
	\item Boltzmann`s statistical mechanics is not applicable to systems in which gravity is the dominant force. 
	\item The Boltzmann entropy of a classical gravitating system is ill-defined or infinite. 
	\item An entropy increasing evolution for a gravitating system is exactly opposite to that of an ideal gas. While the tendency of the latter is to expand into a uniform configuration, the tendency of the former is to clump into one big cluster. 
		
\end{enumerate}

We believe that the first two propositions are simply false, while the third is at least misleading. However, rather than arguing against these claims in the abstract, we will provide a demonstration to the contrary by proposing an analysis of a classical gravitating system in the framework of Boltzmann's statistical mechanics.

We start by looking at the naive calculation, along the lines of the standard textbook computation for an ideal gas, that finds the Boltzmann entropy of the classical gravitating system to be infinite (see, e.g., Kiessling (2001)). For simplicity, we always assume $N$ particles of equal mass $m$. We have
\begin{align}
S(E,N,V) := k_B \log \lvert\Gamma(E,N,V) \rvert = k_B \log \Bigl[\frac{1}{h^{3N}N!} \int\limits_{V^N} \int\limits_{\IR^{3N}} \delta\bigl(H - E \bigr)\ \mathrm{d}^{3N}q \, \mathrm{d}^{3N}p\Bigl],
\end{align}
with 
\begin{align} H^{(N)}(q,p) = \sum\limits_{i=1}^N \frac{\mathbf{p}^2_i}{2m} - \sum\limits_{1\leq i < j \leq N} \frac{Gm^2}{\lvert \mathbf{q}_i - \mathbf{q}_j\rvert}\end{align}
\noindent and 
\begin{align} \label{eq:divergence}
\int\limits_{V^N} \int\limits_{\IR^{3N}} \delta\bigl(H - E \bigr)\ \mathrm{d}^{3N}p \; \mathrm{d}^{3N}q
=  C \, \int\limits_{V^N} \biggl(E +  \sum\limits_{ i < j} \frac{Gm^2}{\lvert \mathbf{q}_i - \mathbf{q}_j\rvert} \biggr)^{\frac{3N-2}{2}} \mathrm{d}^{3N}q = +\infty.
\end{align}
For $N>2$, the integral \eqref{eq:divergence} diverges due to the singularity of the gravitational potential at the origin. Note that $V$ need not correspond to the volume of a physical ``box'' confining the particles; in empty space, it would merely describe a volume (e.g. the smallest sphere, or cuboid, or convex set) enclosing the particle configuration. 

There is nothing mathematically wrong with the above calculation, it just doesn't actually compute what it's supposed to. One problem is that as we integrate over $V^N$, we sum over all possible configurations of $N$ particles (with total energy $E$) within the volume $V$. This includes configurations in which the particles are homogeneously distributed, but also configurations in which most particles are concentrated in a small subset of $V$ (Fig. \ref{fig:volumeintegral}). In the case of the ideal gas in a box, the contribution of the latter is negligible since almost the entire phase space volume is concentrated on spatially homogeneous configurations. It is the entropy (or phase space volume) of this equilibrium state that we actually want to compute, and the mistake we make by including non-equilibrium configurations (in which the particles are concentrated in one half, or one quarter or one third, etc. of the volume) is so small that it is hardly ever mentioned. 

\begin{figure}[ht]
	\includegraphics[width = \textwidth]{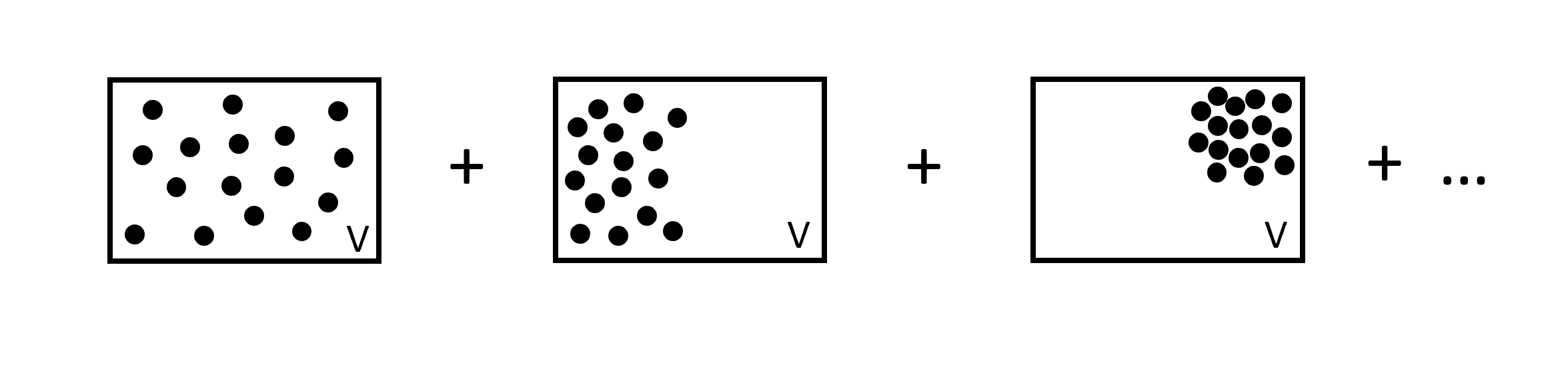}
	\caption{\footnotesize Performing the volume integral in \eqref{eq:divergence}, we sum over \emph{all} possible configurations of the particles within the given volume $V$.}\label{fig:volumeintegral}
\end{figure}
In the case of a gravitating system, the situation is distinctly different, since the spatial configuration is correlated with the kinetic energy or, in other words, with the possible momentum configurations of the system. Simply put, for an attractive potential and constant energy, a more concentrated spatial configuration corresponds to higher kinetic energy and thus larger phase space volume in the momentum variables. The ``total volume'' $V$ is thus not a good macrovariable to describe a system with gravity. In particular, if we want to know whether the entropy of a gravitating system is increasing as the configuration clusters, we have to consider a macroscopic variable that actually distinguishes between more and less clustered configurations (as those shown in Fig. \ref{fig:volumeintegral}). 

We thus propose to describe a system of $N$ gravitating point particles by the following set of macrovariables:

\begin{itemize}
\item $E(p, q) = \frac{p^2}{2m} + V(q) = \sum\limits_{i=1}^N \frac{\mathbf{p}^2_i}{2m} - \sum\limits_{1\leq i<j\leq N} \frac{Gm^2}{|\mathbf{q}_i-\mathbf{q}_j|}$  is the total energy of the system which is a constant of motion.

\item $I(q) = \sum\limits_{i=1}^N m (\textbf{q}_i-\sum\limits_{j=1}^N \textbf{q}_j)^2$ is the moment of inertia that will quantify how much the particles are spread out over space. In the center of mass frame -- which we can and will use without loss of generality -- it simplifies to $I(q)=mq^2= \sum_{i=1}^N m \textbf{q}_i^2$. Notably, our system is not confined by physical boundaries but can expand arbitrarily in empty space. $I(q)$ can thus grow without bound.  
\end{itemize}

\noindent The moment of inertia alone is still too coarse to differentiate between, let's say, a uniform configuration and a concentrated cluster with few residual particles far away. To distinguish between more and less clustered configurations, we thus have to introduce a further macrovariable. We choose: 
\begin{itemize}
	\item $U(q):= -V(q)= \sum\limits_{1\leq i<j\leq N}  \frac{Gm^2}{|\textbf{q}_i-\textbf{q}_j|}$, which is just the absolute value of the potential energy. Since the total energy is $E(q,p)=T(p) + V(q)$, specifying the value of $E$ and $U$ is equivalent to specifying $E$ and the kinetic energy $T(p)= \sum\limits_{i=1}^N \frac{\mathbf{p}^2_i}{2m}$. An increase of $U(q)$ thus signifies both clustering and heating of the system. 
	
	Note that defining macrostates in terms of $U$ (respectively the potential energy) automatically takes care of the \emph{ultraviolet divergence} in the computation of the associated entropy since the minimal particle distance $r$ is bounded as $r\geq \frac{Gm^2}{U}$. 
\end{itemize}
Obviously, we do not claim that the moment of inertia or the gravitational poential energy of the universe can be precisely measured. What makes them macrovariables is, first and foremost, the fact that they are coarse-graining: many different micro-configurations of an $N$-particle universe realize the same values of $I$ and $U$. Moreover, in the next subsection, it will become clearer that the evolution of these macrovariables does indeed provide relevant information about the large-scale structure of a gravitating universe. 

\noindent Now, to determine the entropy of the respective macrostates, we have to compute the phase space volume corresponding to a macroregion $\Gamma(E, I, U)$, that is

\begin{align}\lvert \Gamma(E, I, U) \rvert = \iint\limits_{\IR^{3N}\times\IR^{3N}} \delta\Bigl(\frac{p^2}{2m} + V(q) - E \Bigr)\; \delta\Bigl(V(q) + U\Bigr)\;\delta\Bigl(m q^2 - I \Bigr)\,  \mathrm{d}^{3N}q \, \mathrm{d}^{3N}p
\end{align}
for fixed values of $E, I,$ and $U$. Unfortunately, we weren't able to solve this integral analytically (and maybe this is, in fact, impossible). However, if we replace the sharp values of the macrovariables with a small interval $I(q) \in (I -\Delta I,I +\Delta I), \, \lvert V(q)\rvert \in  (U-\Delta U, U + \Delta U)$ with e.g. $\Delta I = \frac{I}{\sqrt{N}}, \Delta U = \frac{U}{\sqrt{N}}$ (roughly a standard deviation), we can obtain the bounds: 
\begin{equation*}  C\, \Bigl(\frac{G m^{5/2}}{\sqrt{I}U}\Bigr)^3 e^{-5N} (E+U)^{\frac{3N-2}{2}}  {I}^\frac{3N}{2} \leq|\Gamma(E, U\pm \Delta U, I\pm \Delta I)| \leq C e^{\sqrt{N}} (E+U)^{\frac{3N-2}{2}} {I}^\frac{3N}{2} , \end{equation*}
for sufficiently large values of $I$ and $U$ (more precisely, of the dimensionless quantity $\frac{\sqrt{I}U}{G m^{5/2}}$) and $E\geq 0$, where $C$ is a positive constant depending only on $N$ and $m$. A precise statement and proof (valid for any $E$) is given in the appendix. Thus, we have 
\begin{equation}\label{phasespacevolume}
|\Gamma(E, U, I)| \approx {const.}\cdot \bigl(I(E+U))^\frac{3N}{2},
\end{equation} and, ignoring an additive constant,
\begin{equation}\label{graventropy} S(E,I,U) \approx \frac{3N}{2} \Bigl(\log(E+U) + \log(I)\Bigr). \end{equation}

\subsection{Typical evolutions} \label{sec:typicalevol}
We now provide a discussion of this result. 

\begin{enumerate}
	\item With our choice of macrovariables, the associated Boltzmann entropy of a gravitating system is well-defined and finite. We also see that the entropy can grow without bounds, either due to continuous expansion of the system ($I \to + \infty$) and/or due to continuous clustering and self-heating ($U \to +\infty$). 
	\item While common wisdom says that the typical evolution of a gravitating system is one of clumping and clustering, our computation shows that clustering and expansion (as quantified by the macrovariable $U$ and $I$, respectively) can contribute equally to an increase of entropy. This fits well with the observed processes of gravithermal collapse that are known to show a ``core-halo'' pattern (see, e.g., Heggie and Hut (2003, Ch. 23)): the configuration of masses splits into a core that collapses and heats up (increase of $U$) and a collection of particles on the outskirts that are blown away (increase of $I$). 
	
	On even larger (cosmological) scales, a gravitating system in a homogeneous configuration can increase its entropy along both ``dimensions'' by forming many local clusters (``galaxies'') that disperse away from each other -- a process that would look very much like structure formation!
	
	Hence, it seems to be precisely the interplay between the opposing tendencies of clustering and expansion that makes classical gravity much more interesting, from a thermodynamic point of view, than often assumed.
	
	\item  Analytical and numerical results support the conclusion that the typical evolution of a gravitating system is one in which the entropy \eqref{graventropy} increases from a minimum value in both time directions, giving rise to the $\mathsf{U}$-shaped entropy curves proposed by Carroll and Chen.  
The first analytical result is the classical \emph{Lagrange-Jacobi equation} for the gravitational potential: 
		\begin{equation}\label{lagrange-jacobi} \ddot{I} = 4E-2V. \end{equation}
	%\begin{proof}The proof of this equation can be found in Appendix C.\end{proof}
From this equation, which is a standard result in analytical mechanics, it follows immediately that if $E\ge 0$, the second time derivative of the moment of inertia is strictly positive (note that $V$ is negative), meaning that $I(t)$ is a strictly convex (upwards curving) function. Together with the fact that $I\to \infty$ as $t \to \pm \infty$ (Pollard, 1967), we can conclude that the graph of $I$ has precisely the kind of $\mathsf{U}$-shape that we expect for the entropy.

	Thanks to the results of Saari (1971) and Marchal and Saari (1976), we have an even more precise picture of the asymptotic behavior of the Newtonian gravitational $N$-particle system. Their work studies the inter-particle distances $|\textbf{q}_i-\textbf{q}_j|$, as well as the dispersion from the center of mass, for $t \to \infty$ independent of the total energy. It is found that either the minimal particle distance goes to zero \begin{equation*}
		\lim\limits_{t\to\infty} r(t):=\lim\limits_{t\to\infty} \min_{i\neq j} |\textbf{q}_i(t)-\textbf{q}_j(t)|=0,
		\end{equation*} while the greatest particle distance goes to infinity faster than $t$ \begin{equation*}\lim\limits_{t\to\infty}\frac{R(t)}{t} :=\lim\limits_{t\to\infty}\,  t^{-1} \max_{i\neq j} |\textbf{q}_i(t)-\textbf{q}_j(t)| = \infty, \end{equation*}
or the asymptotic behavior in the center-of-mass frame is characterized by 
	\begin{equation}\label{asymptotic}\textbf{q}_i(t)=\textbf{A}_i t + \mathcal{O} (t^{2/3})\;\; \forall i=1,\ldots, N \hspace{0.5cm} \mathrm{and} \hspace{0.5cm} \limsup\limits_{t \to \infty} r >0, \end{equation}
	where $\textbf{A}_i \in \IR^3$ are constant vectors (possibly the zero vector). Note that since the dynamics are time-reversal invariant, the results hold for $t\to -\infty$, as well.

The first case describes so-called ``super-hyperbolic escape''. This scenario is consistent with an increase of our gravitational entropy \eqref{graventropy}, implying both $I \to \infty$ and $U \to \infty$ as $t \to \infty$. It also includes, however, the pathological cases in which solutions diverge in finite time. It is the second case (when super-hyperbolic escape is excluded), in which the Newtonian $N$-body system is much more interesting and generally well-behaved. More precisely, we see that if  \eqref{asymptotic} holds, all inter-particle distances fall into one of the following three classes (see Saari (1971), Cor. 1.1, together with Marchal and Saari (1976), Cor. 6):
\begin{equation} |\textbf{q}_i-\textbf{q}_j|= L_{ij}t  +  \mathcal{O}(t^{2/3}),  \end{equation}
or  
\begin{equation} |\textbf{q}_i-\textbf{q}_j| = \mathcal{O}(t^{2/3}) \end{equation}
or  
\begin{equation} |\textbf{q}_i-\textbf{q}_j| \leq L\end{equation}
for large $t$ and positive constants $L, {L}_{ij}$. 

	The result can be summarized as follows (cf. Saari (1971), p. 227): On sufficiently large time-scales, the system forms clusters, consisting of particles whose mutual distances remain bounded. These clusters form subsystems (clusters of clusters) that are reasonably well isolated (energy and angular momentum are asymptotically conserved in each one of them separately), the distance between their centers of mass growing proportional to $t$. Finally, within each of these subsystems, the clusters separate approximately as $t^{2/3}$. In other words, the long-term behavior of such a Newtonian universe looks very much like \emph{structure formation}, with local clumping into ``galaxies'' and  global expansion due to galaxies and galaxy clusters receding from each other. %Even quantitatively, the result fits reasonably well with the behavior of our actual universe: identifying a ``scale factor'' $a(t)$ with the typical inter-galaxy distance, we find $H(t)=\frac{\dot a}{a} \sim t^{-1}$ for large $t$, consistent with Hubble's law for a universe dominated by matter and/or radiation.
	
	In regard to entropic considerations,  i.e., equation \eqref{graventropy}, we note that the moment of inertia will grow asymptotically like $I(t) \sim t^2$, while the macrovariable $U(t)$ is at least bounded from below by some multiple of $\frac{N}{L^2}$ (assuming that the number of particles in a cluster is of order $N$). What happens at intermediate times? Assuming henceforth a non-negative total energy, we already know that $I(t)$ is strictly convex. Together with its quadratic growth for $t \to \pm \infty$, we can conclude that it has a unique global minimum, let's say at $t=\tau$, from which it increases in both time directions. $U(t)$ will in general fluctuate, but if we exclude particle collisions and ``near particle collisions'' (very close encounters), it will remain bounded and not fluctuate too quickly ($\dot{U}$ remains bounded, as well). Hence, we expect that the graph of $(E + U(t)) I(t)$ (the logarithm of which is proportional to our gravitational entropy) looks qualitatively like that of $I(t)$, namely by and large parabolic. Indeed, numerical simulations by Barbour et al. (2013, 2015) for the $E=0$ universe (with $N=1000$ and random initial data) support the claim that the evolution of $I\cdot U$ is well interpolated by a parabola of the form $\alpha (t-\tau)^2 + \beta$ with $\alpha, \beta >0$. All this suggests the desired $\mathsf{U}$-shaped evolution of the entropy $S(E,I,U) \approx \frac{3N}{2} \Bigl(\log(E+U) + \log(I)\Bigr)$ as a function of time for a Newtonian gravitating universe with non-negative energy. (Actually, on large time-scales, the shape looks less like a $\mathsf{U}$ and more like $\mathsf{\curlyvee}$ -- how some children draw birds on the horizon -- since $S(t)$ grows only logarithmically as $\lvert t-\tau\rvert \to \infty$.)

\end{enumerate}
We conclude that a Newtonian gravitating universe is indeed a ``Carroll universe'' which has no equilibrium state and for which entropy increase (in opposite directions from a global minimum) is typical. This entropy increase is, moreover, consistent with structure formation. It does not merely lead to one big boring clump of matter.

\section{Gravity and typicality from a relational point of view}\label{sec:barbour}

Starting from Machian / Leibnizian principles, Barbour, Koslowski, and Mercati (2013, 2014, 2015) discuss the Newtonian gravitational system from a relationalist perspective. According to the relational framework that Julian Barbour has championed over the past decades, all physical degrees of freedom are described on \emph{shape space} $S$ which is obtained from Newtonian configuration space by factoring out global rotations, translations, and scale, leaving us with a $3N-7$ dimensional space for an $N$-particle system. The configuration of $N$ point particles is then characterized by the angles and ratios between their (Euclidean) distance vectors -- or, in other words, by its \emph{shape} -- independent of extrinsic scales. The lowest-dimensional (non-trivial) shape space is that of $N=3$ particles. In this case, the shapes are those of triangles -- specified by 2 angles or the ratios between 3 distances -- and the topology of shape space that of a 2-dimensional (projective) sphere. 

Considering standard Newtonian gravity on absolute space and trying to extract, so to speak, its relational essence, we have to eliminate all dependencies on extrinsic spatio-temporal structures. To this end, we restrict ourselves to models with vanishing total momentum, $\textbf{P}=\sum_{i=1}^N \textbf{p}_i \equiv 0$, and angular momentum, $\textbf{L}=\sum_{i=1}^N \textbf{q}_i\times\textbf{p}_i \equiv 0$, excluding rotating universes and propagations of the center of mass, respectively.\footnote{Arbitrary solutions of Newtonian mechanics can be projected onto shape space, but the total angular momentum (let's say) cannot be captured by relational initial data. It corresponds to a particular choice of absolute spatio-temporal reference frame (``gauge'') in the shape space theory; cf. Dürr et al. (2019). $\textbf{L} =0$ is then the only canonical choice, and the only one suggested by Machian principles.} Furthermore, the rejection of absolute time-scales leads to considering only universes with zero total energy ($E \equiv 0$), since this is the only value invariant under a rescaling of time-units. 

The problematic issue when it comes to Newtonian gravity is its lack of scale-invariance. Newtonian gravity has models that do not rotate ($L\equiv 0$) and models that do not propagate ($P \equiv 0$) but it does not have models that do not expand ($D:=\frac{1}{2}\dot{I} = \sum_{i=1}^N \textbf{q}_i\cdot\textbf{p}_i \equiv 0$; Barbour calls $D$ the \emph{dilatational momentum}). The characteristic size of an $N$-particle system is given by $\sigma=\sqrt{\frac{I}{m}}$, where $I=m \sum\limits_{i=1}^N \mathbf{q}_i^2 $ is the center-of-mass moment of inertia, and we have already seen that $I$ can never be constant for non-negative energy (equation \eqref{lagrange-jacobi}) but is roughly parabolic as a function of time. In other words: an $N$-particle universe interacting by Newton's law of gravity always changes in size.\footnote{In fact, the general Lagrange-Jacobi identity shows that $E=0$ and  $I\equiv const.$ is possible only if the potential is homogeneous of degree $-2$. This had motivated the alternative, scale-invariant theory of gravitation proposed in Barbour (2003). Here, we discuss the relational formulation of Newtonian gravity with the familiar $\frac{1}{r}$-potential.} Of course, we can (and will) insist that this is meaningless from a relational point of view -- since absolute distance is meaningless -- but the process has nonetheless dynamical (and thus empirical) consequences in Newtonian theory. Simply put: for constant energy, a gravitating system that expands also slows down.\footnote{Compare: The Newtonian laws are invariant under time-independent rotations, but particles in a \emph{rotating} universe experience a centrifugal force affecting their motions (see Maudlin (2012) for an excellent discussion of this issue going back to the famous Leibniz-Clarke debate).} Conversely, if we eliminate scale by hand, namely by a time-dependent coordinate transformation $q \to \frac{q}{\sigma(t)} $, the resulting dynamics can be formulated on shape space, but will no longer have the standard Newtonian form. Instead, the dynamics become non-autonomous (time-dependent) with scale acting essentially like friction (Barbour et al., 2014). 

How to capture this time-dependence without reference to external time? Barbour et al. make use of the fact that the dilatational momentum $D =\frac{1}{2}\dot{I}$ is monotonically increasing ($\ddot{I}>0$ by equation \eqref{lagrange-jacobi}) and can thus be used as an internal time-parameter, a kind of universal clock. In particular, we observe that $D=0$ precisely when $I$ reaches its global minimum. This \emph{central time} thus marks the mid-point between a period of contraction and a period of expansion, or better (though this remains to be justified): the Janus point between two periods of expansion with respect to opposite arrows of time. It provides, in particular, a natural reference point for parametrizing solutions of the shape space theory in terms of \emph{mid-point data} on the shape phase space $T^*S$.\footnote{Mathematically, this is the cotangent bundle of shape space $S$, just as Hamiltonian phase space is the cotangent bundle of Newtonian configuration space.}

%A few more insights from the analysis of Barbour et al. are important to our discussion. First, since the \emph{dilatational momentum} $D =\frac{1}{2}\dot{I}$ is monotonously increasing (recall that $\ddot{I}>0$ by equation \eqref{lagrange-jacobi}), it can be used as an internal time-parameter, a kind of universal clock. We then note that $D=0$ precisely when $I$ reaches its global minimum, i.e., at the Janus point. Therefore, $D=0$ is also called the \emph{central time}. 

%Second, while Newtonian gravity is not scale-invariant, it doesn't depend on an absolute scale either, but only on relative scales (the size of the system at one moment as compared to its size at another).\footnote{That is because for any potential which is homogeneous of degree $k$, a global rescaling $q \to \alpha q$ for constant $\alpha >0$ can be compensated by a corresponding change in time-units, $t \to \alpha^{1-k/2} t$.} Since we already know for Newtonian gravity (and non-negative energy) that the size of the system, quantified by $I$, has a unique global minimum at $D=0$, this central configuration can be used as a natural reference point to eliminate extrinsic scale. 

%With all that has been said and done so far, physically distinct solutions -- from the relational perspective -- can be parametrized by scale-invariant \emph{mid-point data} specified at central time. (The corresponding solution space turns out to be isomorphic to the cotangent bundle $T^*S$ of shape space, just as Hamiltonian phase space is the cotangent bundle of Newtonian configuration space.) 

There is one last redundancy from the relational point of view that Barbour et al. (2015) call \emph{dynamical similarity}. It comes from the invariance of the equations of motion under a simultaneous rescaling of internal time $D$ and shape momenta. More simply put: two solution trajectories are physically identical if they correspond to the same geometric curve in shape space, the same sequence of shapes, even if that curve is run through at different ``speeds''. Thus, factoring out the absolute magnitude of the shape momenta at central time, we reduce the relevant phase space (that parametrizes solutions) by one further dimension. The resulting space $PT^*S$ (mathematically, this is the projective cotangent bundle of shape space $S$) is \emph{compact}, which means, in particular, that it has a \emph{finite total volume} according to the uniform volume measure.
And this is where the relational formulation, i.e., the elimination of absolute degrees of freedom, really starts to pay off. Since the uniform measure on  $PT^*S$ -- that Barbour et al. take to be the natural typicality measure, following Laplace's principle of indifference, is normalizable -- it allows for a statistical analysis that avoids the ambiguities resulting from the infinite phase space measure in the Carroll model. It should be noted that the construction of the measure is not entirely canonical; it involves the choice of a metric on shape space (which, in turn, can be defined through a scale-invariant metric on absolute phase space). 
%We agree that the choice of Barbour et al. (2015) ($\dd s^2 = m\, I^{-1}(q) \sum_{i=1}^N \dd \mathbf{q}_i \cdot \dd \mathbf{q}_i$) is the simplest that comes to mind, but 
In general, the justification of the typicality measure remains a critical step that would require a more in-depth discussion.\footnote{Dürr, Goldstein, and Zangh\`i (2019) provide an insightful discussion of typicality measures on shape space, though focussing on the quantum case.} For instance, we are rather skeptical of a ``principle of indifference'' as a motivation for a uniform measure.\footnote{But see the contribution of Bricmont to this volume, which defends a modern version of the principle.} For the time being, we just take the pragmatic attitude that the typicality measure proposed by Barbour, Koslowski, and Mercati will be justified by its empirical and explanatory success. Deviating from their notation, we denote this measure on the reduced phase space by $\mu_\varepsilon$. 

\subsection{Shape complexity and the gravitational arrow }
To describe the macro-evolution of a gravitating system on shape space, Barbour and collaborators introduce a dimensionless (scale-invariant) macrovariable $C_S$ which they call \emph{shape complexity}:
\begin{equation}C_S=-V\cdot \sqrt{I}.\end{equation} 

\noindent Comparison with \eqref{phasespacevolume} (setting $E=0$ and remembering that $U=-V$) shows a remarkable relationship between this shape complexity and the gravitational entropy that we computed on absolute phase space ($S(E=0, I, U) \approx N \log(\sqrt{I}\cdot C_S)$). Recalling our previous discussion (or noting that $C_S\approx R/r$, where $R$ is the largest and $r$ the smallest distance between particles), we also see that low shape complexity corresponds to dense homogeneous states in absolute space, while high shape-complexity indicates ``structure'' -- dilute configurations of multiple clusters. 

On shape space, considering the simplest case of $N=3$ particles, the configuration of minimal shape complexity is the equilateral triangle, while the configuration of maximal shape complexity corresponds to ``binary coincidences'' in which the distance between two particles -- relative to their distance to the third -- is zero. This is to say that 3-particle configurations with high shape complexity will, in general, contain a Kepler pair (a gravitational bound state of two particles) with the third particle escaping to infinity. 

In Section \ref{sec:typicalevol}, we have discussed the typical evolution of $-V\cdot I$ and found it to be roughly parabolic or  $\mathsf{U}$-shaped. Analogously, one can conclude that the evolution of $C_S = -V \sqrt{I}$ (in Newtonian time) will typically exhibit a $\mathsf{V}$-shaped profile: it has a global minimum at the central time ($D=0$), from which it grows roughly linearly in both time directions (see Fig. \ref{fig:shapecomplexity}). In the terminology of Barbour, Koslowski, and Mercati, this defines two opposite \emph{gravitational arrows of time} with the Janus point as their common past. Note that these are not \emph{entropic} arrows, though our previous discussion strongly suggests that the evolution of the shape complexity on shape space will align with the evolution of the gravitational entropy \eqref{graventropy} on absolute space. 

A remarkable feature of the relational theory, however, is that it reveals the origin of the gravitational arrow to be \emph{dynamical} rather than \emph{statistical}: the negative of the shape complexity corresponds to the potential that generates the gravitational dynamics on shape space. There is thus a dynamical tendency towards higher values of $C_S$ (lower values of the shape potential). In contrast, Boltzmannian statistical mechanics suggests that entropy increase is typical because there are a great many more high-entropy than low-entropy configurations that a system could evolve into. It does not suggest that physical forces are somehow driving the system towards higher entropy states.

\begin{figure}[ht]\label{fig:shapecomplexity}
\begin{center}
\includegraphics[width=0.9\textwidth]{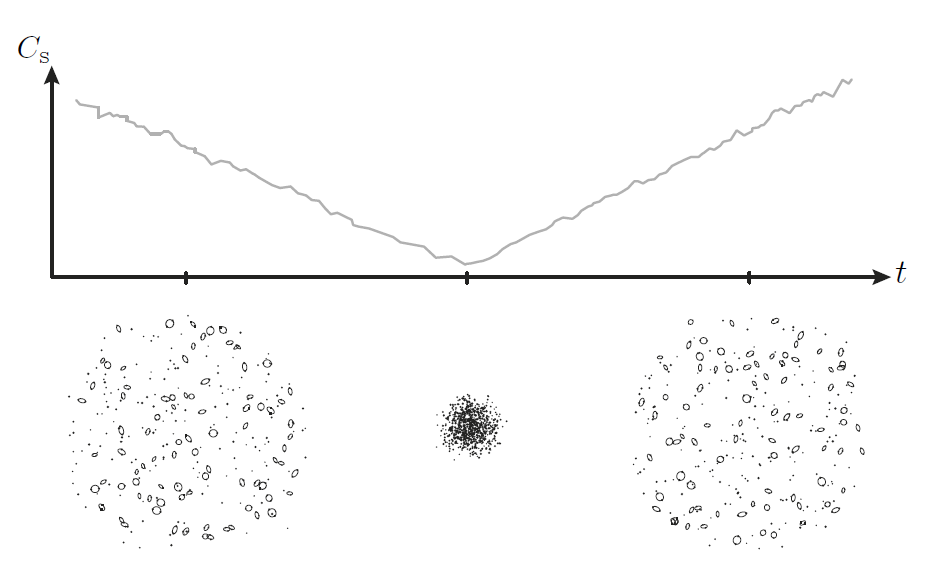}
\caption{\footnotesize Top: evolution of the shape complexity $C_S$ found by numerical simulation for $N = 1000$ and Gaussian initial data. Bottom: schematic conception (not found by numerical simulation) of three corresponding configurations on Newtonian space-time. Source: Barbour et al. (2015).}\label{fig:cs}
\end{center}
\end{figure}

Turning to the statistical analysis of the shape space theory, we are interested in determining typical values of $C_S$ at the Janus point. To this end, we consider the  measure assigned to mid-point data (Janus point configurations) with low shape complexity
\begin{equation}C_S \in [C_{min}, \,  \alpha\cdot C_{min}]:=I_1 , \end{equation} 
respectively high shape complexity
\begin{equation}C_S \in (\alpha\cdot C_{min}, \infty) :=I_\infty.\end{equation}
Here, $1<\alpha \ll \infty$ is some positive constant and $C_{min}$ is the smallest possible value of $C_S$. The key result of the relational statistical analysis (not yet rigorously proven but strongly substantiated by the 3-particle case and numerical experiments for large $N$) is now that already for small values of $\alpha$ ($\alpha < 2$ for large $N$)
\begin{equation} \frac{\mu_\varepsilon({I_\infty})}{\mu_\varepsilon(PT^*S)} \approx 0,\end{equation} 
and consequently
\begin{equation} \frac{\mu_\varepsilon( I_1)}{\mu_\varepsilon(PT^*S)} \approx 1. \end{equation} 
This means: it is typical that a universe at the Janus point (the beginning of our macro-history) is in a very homogeneous state!

Regardless of the philosophical merits of relationalism, the shape space theory of Barbour, Koslowski, and Mercati thus comes with two great virtues: First, it provides a sensible \emph{normalizable} measure on the set of possible micro-evolutions that still establishes an arrow of time as typical. Even more spectacularly, typical evolutions with respect to this measure go through very homogeneous configurations at their Janus point ($\sim$ the ``big bang''). In other words, initial states that have very low entropy from the absolutist point of view come out as typical in the shape space description -- provided that one accepts the proposed measure on $PT^*S$ as a natural typicality measure. This would resolve the two potential problems that we have identified for the Carroll model: the mysteriously low entropy of our universe and the justification for locating our present state reasonably far away from the entropy minimum. %(since the present state of the universe is obviously not homogeneous but structured, the shape complexity of the universe today is significantly higher than it would typically be near the central configuration). 

\subsection{Entaxy and Entropy}

On the other hand, Barbour et al. introduce another concept called (instantaneous) \emph{entaxy} that we find much less compelling. The instantaneous entaxy (the authors also use the term \emph{solution entaxy} for the measure $\mu_\varepsilon$ on $PT^*S$) is supposed to be the measure of a set of shape configurations corresponding to a given value of shape complexity. It thus seems \emph{prima facie} analogous to the Boltzmann entropy defined in terms of the macrovariable $C_S$, with the notable exception that it \emph{decreases} in time as the shape complexity increases. Recall, however, that the measure $\mu_\varepsilon$ was only defined on mid-point data, by cutting through the set of solution trajectories at their Janus points, so to speak. Barbour et al. now extend it to arbitrary (internal) times by stipulating that the entaxy associated with a particular value of shape complexity at \emph{any} point in history is the measure of \emph{mid-point} configurations with that value of shape complexity. 

This definition seems somewhat \emph{ad hoc} and corresponds to comparing macrostates at different times in terms of a measure that is \emph{not stationary} under the dynamics: A set of mid-point data will have a bigger size than the set of time-evolved configurations (phase space volume gets lost, so to speak). Indeed, on the 3-particle shape space, it is not hard to see that the points of maximal shape complexity are dynamical attractors; hence, a stationary continuous measure on shape phase space does not exist. In general, it is not even clear if a stationary measure is a meaningful concept in relational mechanics since there is no absolute (metaphysical) sense in which configurations on different solution trajectories with the same internal time are \emph{simultaneous}. They merely happen to agree on whatever part or feature of the configuration plays the role of an internal ``clock''. For all these reasons, the entaxy should not be regarded as a shape analogon of the Boltzmann entropy (which is always defined in terms of a stationary measure). In particular, the fact that the gravitational arrows point in the direction of decreasing rather than increasing entaxy is not in contradiction with Boltzmannian arguments. %In any case, the entropy concept is so useful because it is related to probabilistic inferences and because it connects -- much like the concept of energy -- different theories and levels of descriptions. Barbour's entaxy has yet to earn similar merits.%\footnote{For a speculative discussion of shape complexity in quantum mechanics, see Barbour et al. (2013); see Barbour et al. (2015) for remarks about general relativity.}

Finally, one may wonder whether we could compute on absolute phase space the Boltzmann entropy associated to the shape complexity or other scale-invariant macrovariables. Note that for $E=0$, our gravitational entropy \eqref{graventropy} is a function of $-V I$. Couldn't we have just computed an entropy for the macrovariable $C_S= -V \sqrt{I}$, instead? Interestingly, the answer is negative, and the reason is the following simple result, showing that macroregions defined by scale-invariant macrovariables would have measure zero or infinity.

\begin{Proposition}
	Let $\mu$ a measure on $\IR^n$ (equipped with its Borel sigma-algebra), which is homogeneous of degree $d$, i.e., $\mu(\lambda A)= \lambda^d \mu (A)$ for any measurable $A \subset \IR^n$ and all $\lambda >0$. Let $F: \IR^n \to \IR^m$ be a measurable function homogeneous of degree $k$, i.e., $F(\lambda x) = \lambda^k F(x), \; \forall x \in \IR^n$. Then we have for any measureable value-set $J \subset \IR^m$:
	\begin{equation}\label{eq:hommeasure}
	\mu\left(\lbrace x \mid F(x) \in  \lambda^k J \rbrace\right) = \lambda^d \mu\Bigl(\lbrace x \mid F(x) \in J \rbrace\Bigr) .
	\end{equation}
	\begin{proof}
		\begin{align*}
		\mu\left(\lbrace x \mid F(x) \in \lambda^k J \rbrace\right) =\mu\left(\lbrace \lambda x \mid F(x) \in J \rbrace\right) = \lambda^{d} \mu\left(\lbrace x \mid F(x) \in  J \rbrace\right).
		\end{align*}
	\end{proof}
\end{Proposition}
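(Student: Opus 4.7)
The plan is to reduce the identity to a single set-level computation and then invoke the homogeneity of $\mu$ exactly once. The key observation is that the preimage $\{x \mid F(x) \in \lambda^k J\}$ can be written directly as a dilation of $\{y \mid F(y) \in J\}$ by the factor $\lambda$, after which the conclusion follows from the assumption $\mu(\lambda A) = \lambda^d \mu(A)$.

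More concretely, I would first show the set identity
\[ \{x \in \IR^n \mid F(x) \in \lambda^k J\} = \lambda \cdot \{y \in \IR^n \mid F(y) \in J\}. \]
For this, I substitute $x = \lambda y$ (legitimate since $\lambda > 0$, so scalar multiplication by $\lambda$ is a bijection of $\IR^n$) and use the homogeneity of $F$: the condition $F(\lambda y) \in \lambda^k J$ becomes $\lambda^k F(y) \in \lambda^k J$, which is equivalent to $F(y) \in J$ because $\lambda^k J$ is just the pointwise dilation of $J$ by $\lambda^k$ and $\lambda^k > 0$ is invertible. Both inclusions of the set identity follow immediately.

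Once the set identity is established, applying $\mu$ to both sides and using the degree-$d$ homogeneity of $\mu$ on the Borel set $A := \{y \mid F(y) \in J\}$ gives $\mu(\lambda A) = \lambda^d \mu(A)$, which is exactly \eqref{eq:hommeasure}. A minor bookkeeping point I would double-check is measurability: $A$ is Borel because $F$ is measurable and $J$ is measurable, and $\lambda A$ is Borel because multiplication by $\lambda \neq 0$ is a homeomorphism of $\IR^n$; so the invocation of the homogeneity hypothesis on $\mu$ is legitimate.

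There is essentially no hard step here — the entire content lies in recognizing that the joint homogeneity of $F$ and $J \mapsto \lambda^k J$ lets the two scale factors cancel inside the preimage, leaving a pure dilation of the underlying set on which the homogeneity of $\mu$ can be applied directly. The only thing that could go wrong in a naive attempt is mixing up the direction of the scaling (scaling $J$ by $\lambda^k$ corresponds to scaling $x$ by $\lambda$, not by $\lambda^k$), which the substitution $x = \lambda y$ makes unambiguous.
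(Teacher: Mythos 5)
Your proof is correct and is essentially the paper's own argument: the paper's one-line chain $\mu(\{x \mid F(x)\in\lambda^k J\})=\mu(\{\lambda x \mid F(x)\in J\})=\lambda^d\mu(\{x\mid F(x)\in J\})$ is exactly your set identity (via the substitution $x=\lambda y$ and cancellation of $\lambda^k$) followed by one application of the homogeneity of $\mu$. Your extra remarks on measurability and on the direction of the scaling are just a more explicit spelling-out of the same steps.
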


\noindent From this, we can immediately conclude:
\begin{Corollary}
	If the measure $\mu$ is homogeneous of degree $d \neq 0$ and $F$ is  homogeneous of degree $0$ (i.e., scale-invariant), then
	\begin{equation} 
	\mu\left(F^{-1}(J)\right) \in \lbrace 0, +\infty\rbrace.
	\end{equation}
	\begin{proof}
		Applying \eqref{eq:hommeasure} with $k =0$ and $d\neq 0$ yields $\mu\left(F^{-1}(J)\right) = \lambda^d \mu\left(F^{-1}(J)\right)$ for any $\lambda >0$.
	\end{proof}
\end{Corollary}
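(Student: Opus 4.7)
The plan is to apply the preceding Proposition directly, with its exponents specialized to $k=0$ (because $F$ is scale-invariant) and general $d\neq 0$ (by hypothesis). The single observation driving everything is that when $F$ is homogeneous of degree zero, the dilated value-set $\lambda^{k} J = \lambda^{0} J = J$ coincides with $J$ for \emph{every} $\lambda>0$, so the left-hand side of \eqref{eq:hommeasure} becomes independent of $\lambda$ while the right-hand side retains a nontrivial prefactor $\lambda^d$. This mismatch between a $\lambda$-independent quantity and a genuinely $\lambda$-dependent one is what forces the dichotomy.

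Concretely, first I would invoke \eqref{eq:hommeasure} with $k=0$ to obtain the identity
\begin{equation*}
\mu\bigl(F^{-1}(J)\bigr) = \lambda^{d}\, \mu\bigl(F^{-1}(J)\bigr) \qquad \text{for every } \lambda > 0.
\end{equation*}
Writing $x := \mu(F^{-1}(J)) \in [0,+\infty]$, this is a single functional constraint $x = \lambda^{d}\, x$ holding simultaneously for all positive $\lambda$. The remaining task is to solve it in the extended non-negative reals: if $x$ were a finite positive number, dividing through by $x$ would yield $\lambda^{d}=1$ for every $\lambda>0$, which contradicts $d\neq 0$ as soon as one takes, for example, $\lambda = 2$. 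Hence the only admissible values are $x=0$ and $x=+\infty$, which is the claimed conclusion.

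The argument carries essentially no technical obstacle, since all of the substantive work has already been absorbed into the Proposition; what remains is just the elementary observation that the multiplicative action $x \mapsto \lambda^{d} x$ on $[0,+\infty]$ has exactly the two trivial fixed points $0$ and $+\infty$ whenever $d \neq 0$. The one point worth flagging explicitly is the appearance of $+\infty$ on equal footing with $0$: under the standard measure-theoretic convention $\lambda \cdot (+\infty) = +\infty$ for $\lambda > 0$, the value $+\infty$ is a legitimate fixed point of the scaling action and hence a legitimate value of a measure satisfying the constraint, which is precisely why the dichotomy is $\{0,+\infty\}$ rather than merely $\{0\}$.
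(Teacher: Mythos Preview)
Your proof is correct and follows exactly the same approach as the paper: apply \eqref{eq:hommeasure} with $k=0$ to obtain $\mu(F^{-1}(J)) = \lambda^{d}\,\mu(F^{-1}(J))$ for all $\lambda>0$, and conclude. The paper leaves the final step (that this forces the value to be $0$ or $+\infty$) implicit, whereas you spell it out; otherwise the arguments are identical.
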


\noindent Hence, using a homogeneous phase space measure (such as the Liouville measure or the microcanonical measure for a homogeneous potential and $E=0$) macroregions defined in terms of scale-invariant macrovariables must have measure zero or infinity, so that the corresponding Boltzmann entropy would be ill-defined. This suggests that the concept of entropy is intimately linked to absolute scales and thus not manifestly relational. Note, in particular, that \emph{expansion} and \emph{heating} -- processes that are paradigmatic for entropy increase (especially, but not exclusively in our analysis of gravitating systems) -- require absolutes scales of distance and velocity, respectively. 

This emphasizes once again that the gravitational arrow of Barbour et al. is not an entropic arrow, although it matches -- maybe accidentally, maybe for reasons we don't yet understand -- the entropic arrow that we identified on absolute phase space. The result also leaves the relationalist with the following interesting dilemma: Either the notion of entropy is meaningful only for subsystems -- for which the environment provides extrinsic scales -- or we have to explain why the entropy of the universe is a useful and important concept \emph{despite} the fact that it is related to degrees of freedom that are strictly speaking unphysical, corresponding to mere gauge in the shape space theory.

\section{Conclusion}
The works of Carroll and Chen as well as Barbour, Koslowski, and Mercati show that it is possible to establish an arrow of time as typical, without the need to postulate special boundary conditions or any other form of Past Hypothesis. By proposing the definition of a Boltzmann entropy for a classical gravitating universe, we argued that Newtonian gravity provides a relevant realization of Carroll's entropy model that can be compared to the shape space formulation of Barbour et al. We found, in particular, that the gravitational arrows identified by Barbour and collaborators in terms of shape complexity will match the entropic arrows in the theory on absolute space. The extension to other microscopic theories (and/or macroscopic state functions) will require further research. The relationalist and the absolutist approaches both provide the resources to avoid the reversibility paradox and ground sensible inferences about the past and future of the universe. However, while certain ambiguities remain in the Carroll model, resulting from the non-normalizability of the phase space measure, those issues are resolved by the shape space theory of Barbour et al. -- provided one accepts their choice of typicality measure. In any case, for a Newtonian gravitating universe, their analysis suggests that homogeneous configurations at the ``big bang'' (Janus point) are \emph{typical}, explaining why the universe started in what looks like a very low-entropy state from an absolutist perspective. However, if the shape space theory is actually fundamental, the ``entropy of the universe'' turns out to be a somewhat spurious concept whose status remains to be discussed in more detail.\\ %Interestingly, in the relational theory, the gravitational arrow of time, characterized by an increase of shape complexity, turns out to be dynamical rather than statistical in nature, though our analysis shows that it corresponds to an entropic arrow from an absolutist point of view. 

\noindent \textbf{Acknowledgements:} We thank Julian Barbour, Detlef Dürr, Valia Allori, Sheldon Goldstein, and two anonymous referees for valuable comments. 

\newpage
\appendix
\section*{Appendix: Computation of the gravitational entropy}
We compute the phase space volume of the macro-region $\Gamma(E,I\pm \epsilon I,U \pm \epsilon U)$, i.e.\footnote{Technically, we have to start with $\int\int ... \mathds{1}\Bigl\lbrace  (1-\epsilon)I \leq \sum\limits_{i=1}^N m (\textbf{q}_i-\frac{1}{N}\sum\limits_{j=1}^N \textbf{q}_j)^2 \leq (1+\epsilon)I\Bigr\rbrace ...$ and the additional constraint $\sum\limits_{j=1}^N \textbf{q}_j \equiv 0$ fixing the center of mass to the origin of our coordinate system.} : 
\begin{equation} \begin{split}\label{entropyintegral1}
\frac{1}{N!h^{3N}} \int d^{3N}p &\int d^{3N}q\ \delta \big(H(\textbf{q},\textbf{p})-E\big)\\ \mathds{1} &\Bigl\lbrace (1-\epsilon)U  \leq  \sum_{\substack{i<j\\ i,j=1}} \frac{Gm^2}{|\textbf{q}_i-\textbf{q}_j|} \leq (1+\epsilon)U \Bigr\rbrace \mathds{1}\Bigl\lbrace  (1-\epsilon)I \leq \sum_{i=1}^N m\textbf{q}_i^2 \leq (1+\epsilon)I\Bigr\rbrace,\end{split}\end{equation}
with the Hamiltonian
\begin{align*} H(q,p) = \sum\limits_{i=1}^N \frac{\mathbf{p}^2_i}{2m} \, - \sum\limits_{1\leq i < j \leq N} \frac{Gm^2}{\lvert \mathbf{q}_i - \mathbf{q}_j\rvert}.\end{align*}
We shall prove the following
\begin{Proposition*}
	If the scale-invariant quantity $\sqrt{I} U$ is large enough that
	\begin{equation}\label{IUbound}
	\sqrt{I} U \geq 4Gm^{5/2} \log(N)N^{5/2} \,,
	\end{equation}
	we obtain the bounds:
	\begin{align*}
	&\lvert \Gamma(E,I\pm \epsilon I,U \pm \epsilon U) \rvert \geq  C  \,  e^{-\frac{9}{2}N} \Bigl(\frac{G m^2 \epsilon}{U}\Bigr)^3 \left(E+(1-\epsilon)U\right)^{\frac{3N-2}{2}}  \Bigl(\frac{I}{m}\Bigr)^{\frac{3N-3}{2}} \, ,\\
	&\lvert \Gamma(E,I\pm \epsilon I,U \pm \epsilon U) \rvert   \leq  {C} \left(E+(1+\epsilon)U\right)^{\frac{3N-2}{2}} \Bigl((1+\epsilon) \frac{I}{m} \Bigr)^{\frac{3N}{2}} \, , 
	\end{align*}
	for any $1> \epsilon > \frac{2}{N}$ and $N\geq 4$, where $C=\frac{(2m)^{\frac{3N-2}{2}}}{2  (N!)h^{3N}} \,\left(\Omega^{3N-1}\right)^2$, with $\Omega^{3N-1}$ the surface area of the $(3N-1)$-dimensional unit sphere.\\
	
	\noindent For non-negative $E$, this can be simplified further using $(E+(1+\epsilon)U)^{n} \leq (1+\epsilon)^{n} (E+U)^{n} \leq e^{\epsilon n} (E+U)^{n}$, respectively  $(E+(1-\epsilon)U)^{n} \geq (1-\epsilon)^{n}(E+U)^{n}  \geq  e^{-2\epsilon n} (E+U)^{n}$, for $\epsilon <\frac{1}{2}$. 
\end{Proposition*}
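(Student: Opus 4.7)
The plan is to perform the integral in \eqref{entropyintegral1} in two stages: first integrate over $p$ using the energy $\delta$-function, which at fixed $q$ is a spherical surface integral, then bound the resulting configuration-space integral subject to the indicators on $I(q)$ and $U(q)$. For the momentum step, the energy constraint confines $p$ to the sphere $p^2 = 2m(E + U(q))$ in $\IR^{3N}$, and a standard spherical-coordinate calculation combined with $\delta(\rho^2/(2m) - R^2/(2m)) = (m/R)\,\delta(\rho - R)$ gives
\begin{equation*}
\int d^{3N}p\ \delta(H(q,p) - E) = m\,\Omega^{3N-1}\,\bigl(2m(E+U(q))\bigr)^{(3N-2)/2}.
\end{equation*}
On the support of the $U$-indicator the bracketed quantity is pinched between $2m(E+(1-\epsilon)U)$ and $2m(E+(1+\epsilon)U)$, which directly produces the $(E+(1\pm\epsilon)U)^{(3N-2)/2}$ factors appearing in both bounds.

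For the upper bound I would simply discard the $U$-indicator (it is bounded by $1$) and dominate the $I$-shell by the outer ball $\{m|q|^2 \le (1+\epsilon)I\}$ in the CoM-reduced configuration space; computing the ball volume and absorbing the dimensional prefactors from the CoM reduction into the constant $C$ yields the claimed upper bound directly.

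The lower bound is the substantive step, and my strategy is to \emph{concentrate the potential energy on a single close pair} $(1,2)$ while letting the other $N-2$ particles realize the moment of inertia on a well-separated configuration. Concretely, I would build a set $A$ of admissible configurations by: (a) placing particles $3,\ldots,N$ on a spherical shell with $m\sum_{i\ge 3} \mathbf{q}_i^2$ slightly below target (say in $((1-\epsilon)I,(1-\epsilon/2)I)$), insisting that all mutual separations exceed $\delta \sim N^{-3/2}\sqrt{I/m}$; (b) fixing the pair vector $\mathbf{q}_2 - \mathbf{q}_1$ in a thin spherical shell of radii $\approx Gm^2/U$ and relative thickness $\sim \epsilon$, so that the close pair supplies a potential $U_{12}$ with $|U_{12} - U| \le \epsilon U/2$; (c) letting the pair's center of mass vary so as to tune the total $I(q)$ into its window while respecting $\sum_i\mathbf{q}_i = 0$. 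The hypothesis $\sqrt{I}\,U \ge 4Gm^{5/2}N^{5/2}\log N$ enters exactly here, since a crude sum over the $\binom{N-2}{2}$ well-separated pairs bounds the residual potential by $\lesssim G m^{5/2}N^{5/2}/(\sqrt{I}\log N) \ll \epsilon U$, so the close pair really does dominate $U(q)$ within tolerance. Multiplying the shell volume from (a), the $\sim(\epsilon G m^2/U)^3$ from (b), the radial/angular factor from (c), the momentum contribution above, and the prefactor $1/(N!h^{3N})$ matches the lower bound.

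\textbf{Main obstacle.} The hardest step is (a): showing that the minimum-separation constraint excises only a fraction $e^{-\mathcal{O}(N)}$ of the available shell. The natural route is a sequential union-bound, placing particles one at a time and controlling the volume of forbidden balls around previously-placed ones; each step costs a bounded fraction of the remaining shell, yielding an $e^{-\mathcal{O}(N)}$ loss. Matching the concrete exponent $e^{-9N/2}$ in the statement will require careful per-particle bookkeeping together with Stirling-type estimates for $\Omega^{3N-1}$, and this is where I expect the real technical effort to lie.
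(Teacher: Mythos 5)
Your momentum-shell step and your upper bound are correct and essentially identical to the paper's own argument, so the issue is entirely in the lower bound, and there the proposal has a genuine gap: it hinges on the residual potential of the $N-2$ ``spectator'' particles being $\ll \epsilon U$, and this cannot be delivered by the stated hypothesis. Quantitatively, a uniform minimal-separation constraint $\delta \sim N^{-3/2}\sqrt{I/m}$ plus a crude sum over pairs only gives the worst-case bound $\binom{N-2}{2} Gm^2/\delta \sim Gm^{5/2}N^{7/2}/\sqrt{I}$, not the $N^{5/2}$ you quote; under the hypothesis $\sqrt{I}\,U \geq 4Gm^{5/2}\log(N)N^{5/2}$ this is of order $NU/\log N$, i.e.\ far \emph{above} $U$. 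The $N^{5/2}\log N$ scaling that the hypothesis is tailored to is obtained in the paper not by a uniform separation but by an ordered radial (``onion'') construction, $B_j=\lbrace \lvert\mathbf{q}_j\rvert \in [(2j-2)\xi,(2j-1)\xi]\rbrace$ with $\xi\sim \sqrt{I/m}\,N^{-3/2}$, so that $\lvert\mathbf{q}_i-\mathbf{q}_j\rvert \geq (2\lvert j-i\rvert-1)\xi$ and the pair sum collapses to a harmonic series, giving residual $\leq 2N\log(N)Gm^2/\xi < U$. And even with such an improved (typical-distance) estimate, the hypothesis only yields residual $\lesssim U/(4\log N)$, which is \emph{not} small compared to $\epsilon U$ over the full claimed range $\epsilon > 2/N$; so fixing the pair separation in an a priori window of relative width $\epsilon$ around $Gm^2/U$ cannot force the \emph{total} potential into $[(1-\epsilon)U,(1+\epsilon)U]$ when $\epsilon$ is near its lower limit. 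As written, your construction proves at best a weaker statement with $\epsilon$ bounded below by something like $1/\log N$.

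The missing idea is that the close-pair distance must be tuned to the rest of the configuration rather than fixed in advance. The paper never asks the residual to be negligible, only to be $< U$: with $\mathbf{q}_1$ at the origin the potential is below $U$, while it diverges as $\mathbf{q}_1 \to \mathbf{q}_2$, so by an intermediate-value argument there is a point $\lambda\mathbf{q}_2$ where the total potential is \emph{exactly} $U$; only then is this point thickened to a ball of radius $\epsilon Gm^2/(2U)$, inside which the potential stays within $\epsilon U$ and the moment of inertia within $\epsilon I$, producing the $(\epsilon Gm^2/U)^3$ factor uniformly in $\epsilon > 2/N$. Relatedly, your expectation that the main effort lies in showing the separation constraint excises only an $e^{-\mathcal{O}(N)}$ fraction is misplaced: in the paper's construction nothing is excised at all, and the factor $e^{-9N/2}$ arises from Stirling-type bookkeeping when comparing $\prod_j\lvert B_j\rvert$ (together with the $N!$ permutation factor and $\Omega^{3N-1}$) with $(I/m)^{3(N-1)/2}$.
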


\begin{proof}
	We first perform the integral over the momentum variables and are left with
	\begin{align*} \frac{(2m)^{\frac{3N-2}{2}}}{2 N!h^{3N}} \,&\Omega^{3N-1} \int \mathrm{d}^{3N}q \, \bigg(E+\sum_{i<j}\frac{Gm^2}{|\textbf{q}_i-\textbf{q}_j|}\bigg)^{\frac{3N-2}{2}}\\ &\mathds{1} \Bigl\lbrace (1-\epsilon)U  \leq  \sum_{\substack{i<j\\ i,j=1}} \frac{Gm^2}{|\textbf{q}_i-\textbf{q}_j|} \leq (1+\epsilon)U \Bigr\rbrace \mathds{1}\Bigl\lbrace  (1-\epsilon)I \leq \sum_{i=1}^N m\textbf{q}_i^2 \leq (1+\epsilon)I\Bigr\rbrace.\end{align*}
	From this, it is straightforward to obtain the upper bound: 
	\begin{equation*}\begin{split}
	\eqref{entropyintegral1}&\leq  \frac{(2m)^{\frac{3N-2}{2}}}{2 N!h^{3N}} \,\Omega^{3N-1}  \frac{2\Omega^{3N-1}}{3N} (E+(1+\epsilon)U)^{\frac{3N-2}{2}}  \Bigl[ \Bigl((1+\epsilon)\frac{I}{m}\Bigr)^{\frac{3N}{2}}  -((1-\epsilon)\frac{I}{m})^{\frac{3N}{2}}  \Bigr] \\
	&\leq   \frac{C}{3N}  \left (1+\epsilon\right)^{\frac{3N}{2}} \bigl(E+(1+\epsilon)U\bigr)^{\frac{3N-2}{2}} \Bigl(\frac{I}{m} \Bigr)^{\frac{3N}{2}}. 
	\end{split}\end{equation*}

	\noindent For the lower bound, we consider the set $\mathcal{B}:= B_1 \times \ldots \times B_{N} \subset \IR^{3N}$ defined by
	\begin{equation*}B_j:= \bigl\lbrace \lvert \mathbf{q}_j \rvert \in [(2j-2) \xi , (2j-1) \xi] \bigr\rbrace , \end{equation*}
	with $\xi = \xi(I, N)$ to be determined soon. That is, we consider a series of concentric spheres around the origin, their radii being an increasing multiple of $\xi$, and configurations for which the volume between two spheres is alternately empty or occupied by a single particle. In  $\mathcal{B}$, we have $mq^2 \in [I_{+}-\Delta I, I_{+}]$ with 
	\begin{equation}
	I_{+}=\sum\limits_{i=2}^{N} m \mathbf{q}^2_i \leq m \sum\limits_{i=2}^{N}(2i-1)^2 \xi^2 = \frac{m}{3}N(4N^2 -1)\xi^2
	\end{equation}
	and
	\begin{equation*}
	\Delta I = m\sum\limits_{i=1}^{N}[(2i-1)^2 - (2i-2)^2]\xi^2 =m \sum\limits_{i=1}^{N}[4i-3]\xi^2 \leq 2m N^2  \xi^2 \leq \frac{2}{N} I_{+}.
	\end{equation*}
	We thus set 
	\begin{equation}
	\xi := \sqrt{\frac{3 I}{m N (4N^2 -1)}},
	\end{equation}
	so that $I_+=I$, and note that $\frac{2}{N}I < \epsilon I$ for $\epsilon > \frac{2}{N}$ , so configurations in $\mathcal{B}$ are within the right range of values for the moment-of-inertia macrovariable. 
	
	Now we have to make sure to consider configurations whose potential energy is also in the right range $\rvert V(q)\rvert \in [U \pm \epsilon U]$. 
	
	To this end, we note that for $q \in \mathcal{B}$, the distance between two particles is bounded from below by $\lvert \mathbf{q}_i - \mathbf{q}_j\rvert \geq (2(j-i)-1)\xi$, and for each $1\leq k \leq N-1$, there exists less than $N$ particle pairs with $j-i=k$. Hence, the potential energy is bounded by 
	\begin{equation}
	\sum\limits_{i<j}\frac{Gm^2}{\lvert \mathbf{q}_i - \mathbf{q}_j\rvert} \leq N \sum\limits_{k=1}^{N-1} \frac{Gm^2}{(2k-1)\xi}\leq  2 N \log(N) \frac{Gm^2}{\xi} < U,
	\end{equation}
	where we used the assumption \eqref{IUbound} and the estimate
	\begin{equation*}
	\sum\limits_{k=1}^{N-1}\frac{1}{(2k-1)} \leq 1 + \sum\limits_{k=2}^{N-1}\frac{1}{k} \leq  1 + \int\limits_1^{N} \frac{1}{x} \dd x = 1 +\log(N) \leq 2\log(N), \; \text{for } N \geq 4.
	\end{equation*}
	\noindent In particular, we know that for $q \in \mathcal{B}$ and, e.g., $\mathbf{q}_1 =\mathbf{0}$,  we have $\lvert V(q)\rvert < U$. But also that $\lim_{ \mathbf{q}_1 \to \mathbf{q}_2} \lvert V(q)\rvert = +\infty$. Hence, by the mean value theorem, there exists for any $(\mathbf{q}_2, \ldots, \mathbf{q}_N) \in B_2\times \ldots \times B_N$ a $\lambda \in (0,1)$ such that $-V(\lambda \mathbf{q}_2, \mathbf{q}_2, \ldots, \mathbf{q}_N) = U$.
	
	Now it is not difficult to check that if we place $\mathbf{q}_1$ at a distance of not more than $r:=\epsilon \frac{Gm^2}{2U}$ from $\lambda \mathbf{q}_2$, the potential energy $\lvert V(q)\rvert =  \sum\limits_{i=2}^N \frac{Gm^2}{\lvert \vec{q}_1 - \vec{q}_i\rvert} + \sum\limits_{1\neq i < j \leq N} \frac{Gm^2}{\lvert \vec{q}_i - \vec{q}_j\rvert}$ will change by less than $\left\lvert \lvert V(q)\rvert - U\right\rvert \leq \epsilon U$. Moreover, while $\mathbf{q}_1$ may no longer be in $B_1$, the moment of inertia $m\sum_{i=1}^{N} \mathbf{q}_i^2$ increases by less than $m(3\xi)^2 < \epsilon I$ and thus remains within the interval $[I \pm \epsilon I]$.

	%Moreover, we have
	%\begin{equation*} 
	%\nabla_{\mathbf{q}_1}V(\mathbf{q}_1,\mathbf{q}_2, \ldots,\mathbf{q}_N) =  Gm^2 \sum\limits_{i=2}^N \frac{\mathbf{q}_1 - \mathbf{q}_i}{\lvert \mathbf{q}_1 - \mathbf{q}_i \rvert^3},
	%\end{equation*}
	%and thus $\lvert \nabla_{\mathbf{q}_1} V \rvert \leq \frac{\lvert V \rvert^2}{Gm^2}$. Hence, if $\mathbf{q}_1$ is inside a ball of diameter  $r= \epsilon \frac{Gm^2}{2U}$ around $\lambda \mathbf{q}_2$, we certainly have 
	%$\lvert V(q) - U\rvert \leq 
	
	We denote by $K_{r}[q]$ the ball of radius $r$ around $\lambda\mathbf{q}_2$, with $\lambda = \lambda(\mathbf{q}_2, \ldots, \mathbf{q}_N)$ as introduced above. Its volume is $\frac{4\pi}{3} \Bigl(\frac{Gm^2\epsilon}{2U}\Bigr)^3$. The volume of the set $B_j, j\in \lbrace 2,\ldots, N\rbrace$ is
	\begin{equation*}
	\lvert B_j \rvert = \frac{4 \pi}{3}\left[ (2j-1)^3 - (2j-2)^3   \right] \xi^3 = \frac{4 \pi}{3}\left[ j(12j-18)+7 \right] \geq \frac{16 \pi}{3}j^2 \, .
	\end{equation*}
	Hence:
	\begin{equation*}
	\lvert B_2 \times \ldots \times B_N \rvert = \prod_{j=2}^{N} \lvert B_j \rvert = \Bigl(\frac{16 \pi}{3}\Bigr)^{N-1}  \xi^{3(N-1)} \prod_{j=2}^{N} j^2 = \Bigl(\frac{16 \pi}{3}\Bigr)^{N-1}  \xi^{3(N-1)} (N!)^2 \, .
	\end{equation*}
	To simplify, we note that $\xi^{3(N-1)} \geq \Bigl(\frac{3}{4}\Bigr)^{\frac{3(N-1)}{2}}  \Bigl(\frac{I}{m}\Bigr)^{\frac{3(N-1)}{2}} N^{-{\frac{9(N-1)}{2}}}$. And  comparing with the area of the unit sphere,  $\Omega^{3N-1}=\frac{2 \pi^{3N/2}}{\Gamma\left(\frac{3N}{2}\right)}$, we use $\Bigl(\frac{16 \pi}{3}\Bigr)^{N-1} \Bigl(\frac{3}{4}\Bigr)^{\frac{3(N-1)}{2}}  > \Omega^{3N-1}\Gamma\left(\frac{3N}{2}\right)$. Thus: 
	\begin{align*} \eqref{entropyintegral1} \geq &\,
	\frac{(2m)^{\frac{3N-2}{2}}}{2 N!h^{3N}} \,\Omega^{3N-1} \int\limits_{K_{r(q)}\times B_2 \times \ldots \times B_N} \mathrm{d}^{3N}q \, \bigg(E+\sum_{i<j}\frac{Gm^2}{|\textbf{q}_i-\textbf{q}_j|}\bigg)^{\frac{3N-2}{2}} \mathds{1} \Bigl\lbrace \ldots\Bigr\rbrace \mathds{1} \Bigl\lbrace \ldots\Bigr\rbrace\\
	\geq & \,  C\, N^{-{\frac{9(N-1)}{2}}} (N!)^2 \, \Gamma\left(\frac{3N}{2}\right)  \Bigl(\frac{G m^2 \epsilon}{2U}\Bigr)^3 \  \left(E+(1-\epsilon)U\right)^{\frac{3N-2}{2}}  \Bigl(\frac{I}{m}\Bigr)^{\frac{3N-3}{2}} \, . \end{align*}
	
	\noindent Summing over all possible permutations of the particles over the rings in the definition of $\mathcal{B}$, we get an additional factor of $N!$. With the Sterling approximation $\Gamma(n+1)=n!>\sqrt{2\pi n} \Bigl(\frac{n}{e}\Bigr)^n$, we finally obtain a lower bound of the form
	\begin{equation*} 
	\eqref{entropyintegral1} \geq C\, e^{-\frac{9}{2}N} \Bigl(\frac{ G m^2 \epsilon}{U}\Bigr)^3 (E+(1-\epsilon)U)^{\frac{3N-2}{2}}  \Bigl(\frac{I}{m}\Bigr)^{\frac{3N-3}{2}}.
	\end{equation*}
\end{proof}

\newpage

\end{document}